\documentclass[11pt]{amsart}
\usepackage{amsmath,amssymb,amsfonts,amsthm,amscd,textcomp,times,booktabs}
\usepackage[dvipdfmx]{graphicx}
\usepackage{braket}
\usepackage{enumerate}
\usepackage{color,float}
\usepackage{bm}
\usepackage{multirow}
\usepackage{arydshln}
\usepackage[all]{xy}
\usepackage[normalem]{ulem}
\usepackage{mathtools}
\setlength{\textwidth}{470pt}

\bibliographystyle{plain}
\newtheorem{theorem}{Theorem}

\newtheorem{proposition}[theorem]{Proposition}
\newtheorem{lemma}[theorem]{Lemma}

\theoremstyle{definition}
\newtheorem{definition}[theorem]{Definition}
\newtheorem{remark}[theorem]{Remark}

\newtheorem{construction}[theorem]{Construction}
\newtheorem{convention}[theorem]{Convention}

\setlength{\oddsidemargin}{0.25in}
\setlength{\evensidemargin}{0.25in}
\setlength{\textwidth}{6in}
\setlength{\topmargin}{-.5in}
\setlength{\textheight}{9in}
\numberwithin{equation}{section}
\numberwithin{theorem}{section}

\usepackage{scalerel}
\usepackage{stackengine}
\stackMath
\newsavebox\tmpbox
\newcommand\arc[1]{\ThisStyle{\sbox\tmpbox{$\SavedStyle#1$}\stackon[0pt]{\usebox{\tmpbox}}{\stretchto{\scaleto{
\scalerel*[\wd\tmpbox]{\mkern-.8mu\frown\mkern-.8mu}{\rule[-\textheight/2]{1ex}{\textheight}}}{\textheight}}{0.8ex}}}}
\parskip 1ex

\newcommand{\Lt}{\ensuremath{\mathrm{L}}}
\newcommand{\Rt}{\ensuremath{\mathrm{R}}}
\newcommand{\norm}[1]{\ensuremath{\left| #1 \right|}}
\newcommand{\ora}[1]{\ensuremath{\overrightarrow{#1}}}

\newcommand{\proj}{\ensuremath{\mathrm{proj}}}
\newcommand{\can}{\ensuremath{\mathrm{c}}}

\begin{document}
\title[Geometric onstruction of canonical $3$D gadgets in origami extrusions]
{Geometric construction of canonical $3$D gadgets\\ in origami extrusions}
\author{Mamoru Doi}
\address{11-9-302 Yumoto-cho, Takarazuka, Hyogo 665-0003, Japan}
\email{doi.mamoru@gmail.com}
\maketitle
\noindent{\bfseries Abstract.}~
In a series of our three previous papers, we presented several constructions of positive and negative $3$D gadgets in origami extrusions
which create with two simple outgoing pleats a top face parallel to the ambient paper and two side faces sharing a ridge,
where a $3$D gadget is said to be positive (resp. negative)
if the top face of the resulting gadget seen from the front side lies above (resp. below) the ambient paper.
For any possible set of angle parameters, we obtained an infinite number of positive $3$D gadgets in our second paper,
while we obtained a unique negative $3$D gadget by our third construction in our third paper.

In this paper we present a geometric (ruler and compass) construction of our third negative $3$D gadgets, 
while the construction presented in our third paper was a numerical one using a rather complicated formula.
Also, we prove that there exists a unique positive $3$D gadget corresponding to each of our third negative ones.
Thus we obtain a canonical pair of a positive and a negative $3$D gadget.
The proof is based on a geometric redefinition of the critical angles which we introduced in constructing our positive $3$D gadgets.
This redefinition also enables us to give a simplified proof of the existence theorem of our positive $3$D gadgets in our second paper.
As an application, we can construct a positive and a negative extrusion from a common crease pattern by using the canonical counterparts,
as long as there arise no interferences.
\section{Introduction}
This is the fourth in a series of our papers on constructions of $3$D gadgets in origami extrusions,
where an origami extrusion is a folding of a $3$D object in the middle of a flat piece of paper,
and $3$D gadgets are ingredients for origami extrusions which create faces with solid angles.
In our three previous papers \cite{Doi19}, \cite{Doi20} and \cite{Doi21}, we studied $3$D gadgets which create with two simple outgoing pleats
a top face parallel to the ambient paper and two side faces sharing a ridge,
where a simple pleat consists of a mountain and a valley fold which are parallel to each other.
If the top face of a $3$D gadget seen from the front side lies above (resp. below) the ambient paper,
then the $3$D gadget is said to be \emph{positive} (resp. \emph{negative}).

In our second paper \cite{Doi20}, we presented a construction of positive $3$D gadgets with flat back sides
extending the construction in \cite{Doi19}, which improve the $3$D gadgets developed by Natan with a supporting pyramid on the back side \cite{Natan},
where we named our positive $3$D gadgets (positive) \emph{origon gadgets} or simply (positive) \emph{origons}.
Meanwhile, in our third paper \cite{Doi21}, we presented three constructions of negative $3$D gadgets in origami extrusions
in addition to two known constructions before ours, of which we also extended Cheng's construction \cite{Cheng}.
Let us also call negative $3$D gadgets by our three constructions \emph{negative origon gadgets} or simply \emph{negative origons}.

The constructions of the crease patterns of positive and our third negative origon gadgets are characterized by the `dividing point' of the circular arc
which has a radius of the same length as the ridge and subtains the angle formed between the side faces in the development.
In our second paper \cite{Doi20}, we introduced the critical angles by which the possible range of the dividing point is determined.
Then we proved that for any possible set of angle parameters,
there exist infinitely many positive origons which are distinguished by the choice of the dividing point.
Meanwhile, in our third paper we proved that for any possible set of angle parameters, our third construction provides a unique negative origon.
Let us call a negative origon gadget uniquely obtained in our third construction a \emph{canonical} negative origon (gadget).
The unique dividing point for a canonical negative origon is specified by the rotation angle from either endpoint of the arc.
However, the angle was not given geometrically but numerically by a rather complicated formula.

In this paper we present a geometric (ruler and compass) construction of canonical negative origon gadgets in Construction $\ref{const:neg_can}$.
This solves the problem which came up in the conclusion of \cite{Doi21}.
Also, we prove in Theorem $\ref{thm:exist_pos_can}$ that for any canonical negative origon gadget,
there exists a positive one with the same set of angle parameters and the same choice of the dividing point as the negative one.
This result yields a \emph{canonical} positive origon among infinitely many compatible ones,
and a \emph{canonical pair} consisting of a positive and a negative origon which are both canonical and correspond to each other.
The proof is done by showing that the dividing point lies in the range of constructibility of positive origons,
which is determined by the critical angles.
For this purpose, we give in Definition $\ref{def:crit_geom}$ a geometric redefinition of the critical angles,
which directly determines the range of constructibility.
Using this redefinition, which is simpler than the original geometric definition and its numerical rephrasing given in \cite{Doi20},
we can give a simplified proof of the existence theorem of positive origons in Theorem $\ref{thm:zeta_L+R}$.

As an application of canonical pairs, we can construct a positive and a negative extrusion made of origon gadgets
from a common crease pattern by replacing each origon with its canonical counterpart.

\renewcommand{\theenumi}{\arabic{enumi}}
\begin{convention}\label{conv}
We will use the following conventions.
\begin{enumerate}
\item As in our previous papers, we will use subscript $\sigma$ for `L' and `R' standing for `left' and `right' respectively,
and $\sigma'$ for the other side of $\sigma$, that is,
\begin{equation*}
\sigma'=\begin{dcases}\Rt&\text{if }\sigma =\Lt ,\\
\Lt&\text{if }\sigma =\Rt .\end{dcases}
\end{equation*}
\item We will use `development' for the flat piece of paper obtained by developing a $3$D gadget, which includes not only the net of the extruded object
but also the creases hidden behind after the folding.
\item We will determine the standard side from which a $3$D gadget is seen so that the extruded faces lie above the ambient paper.
Thus we will usually consider the development of a positive (resp. negative) gadget seen from the front (resp. back) side.
\item If we denote a circular arc with endpoints $A$ and $B$ by $\arc{AB}$,
then we will regard the arc as swept \emph{counterclockwise} from $A$ to $B$.
\item If we denote an angle by $\angle_+BAC$ (resp. $\angle_-BAC$),
then we will regard the angle as the rotation angle around $A$ of the ray starting from $A$ through $B$ to the ray starting from $A$ through $C$
measured counterclockwise (resp. clockwise), which may take a negative value depending on the range of the angle under consideration.
\end{enumerate}
\end{convention}

To fix the conditions and the construction of typical points, lines and angles common to all of the later constructions of origon gadgets,
we prepare the following.
\begin{construction}\label{const:cond}
We consider a net on a flat piece of paper as in Figure $\ref{fig:dev_0}$, where $\alpha ,\beta_\Lt ,\beta_\Rt ,\gamma\in (0,\pi )$.
We require the following conditions on $\alpha$ (or $\gamma$), $\beta_\Lt ,\beta_\Rt ,\delta_\Lt$ and $\delta_\Rt$.
\renewcommand{\theenumi}{\roman{enumi}}
\begin{enumerate}
\item $\alpha <\beta_\Lt + \beta_\Rt$, $\beta_\Lt <\alpha +\beta_\Rt$ and $\beta_\Rt <\alpha+ \beta_\Lt$, or equivalently,
$\beta_\Lt +\gamma /2<\pi$, $\beta_\Rt +\gamma /2<\pi$ and $\beta_\Lt +\beta_\Rt +\gamma /2>\pi$.
\item $\alpha +\beta_\Lt +\beta_\Rt <2\pi$, or equivalently, $\gamma >0$.
\end{enumerate}
To construct a negative gadget from the above net, we prescribe its simple outgoing pleats by introducing parameters $\delta_\sigma$
for their changes from the direction of $\ora{AB_\sigma}$ for $\sigma =\Lt ,\Rt$, for which we further require the following conditions.
\begin{enumerate}
\item[(iii.a)] $\delta_\Lt\geqslant 0$ and $\delta_\Rt\geqslant 0$,
where we take clockwise (resp. counterclockwise) direction as positive for $\sigma =\Lt$ (resp. $\sigma=\Rt$).
\item[(iii.b)] $\delta_\Lt <\beta_\Lt$ and $\delta_\Lt <\beta_\Lt$.
\item[(iii.c)] $\alpha +\beta_\Lt +\beta_\Rt -\delta_\Lt -\delta_\Rt >\pi$, or equivalently, $\gamma +\delta_\Lt +\delta_\Rt <\pi$.
\end{enumerate}
In particular, if $\delta_\Lt =\delta_\Rt =0$, then conditions (iii.a)--(iii.c) are simplified as
\begin{enumerate}
\setcounter{enumi}{2}
\item $\alpha +\beta_\Lt +\beta_\Rt >\pi$, or equivalently, $\gamma <\pi$.
\end{enumerate}
Then we construct the creases of simple pleats $(\ell_\Lt ,m_\Lt )$ and $(\ell_\Rt ,m_\Rt )$ consisting of a mountain and a valley fold parallel to each other,
which we prescribe as the outgoing pleats of positive and negative origons as follows, where we regard $\sigma$ as taking both $\Lt$ and $\Rt$.
\renewcommand{\theenumi}{\arabic{enumi}}
\begin{enumerate}
\item Draw a ray $\ell_\Lt$ starting from $B_\Lt$ and going to the direction of $\ora{AB_\Lt}$ followed by a clockwise rotation by $\delta_\Lt$.
Also, draw a ray $\ell_\Rt$ starting from $B_\Rt$ and going to the direction of $\ora{AB_\Rt}$ followed by a counterclockwise rotation by $\delta_\Rt$.
Following Convention $\ref{conv}$, $(5)$, we can also define $\ell_\Lt$ (resp. $\ell_\Rt$) as a ray starting from $B_\Lt$ (resp. $B_\Rt$) such that
$\angle_+AB_\Lt\ell_\Lt =\pi -\delta_\Lt$ (resp. $\angle_-AB_\Rt\ell_\Rt =\pi -\delta_\Rt$).
\item Draw a perpendicular to $\ell_\sigma$ through $B_\sigma$ for both $\sigma =\Lt ,\Rt$, letting $C$ be the intersection point.
\item Draw a perpendicular bisector $m_\sigma$ to segment $B_\sigma C$ for both $\sigma =\Lt ,\Rt$, letting $P$ be the intersection point.
Since $P$ is the excenter of $\triangle C B_\Lt B_\Rt$,
segment $AP$ is a perpendicular bisector of segment $B_\Lt B_\Rt$ and also a bisector of $\angle B_\Lt AB_\Rt$.
\end{enumerate}
The resulting creases are shown as solid lines in Figure $\ref{fig:dev_1}$.
\end{construction}
\addtocounter{theorem}{1}
\begin{figure}[htbp]
\centering\includegraphics[width=0.75\hsize]{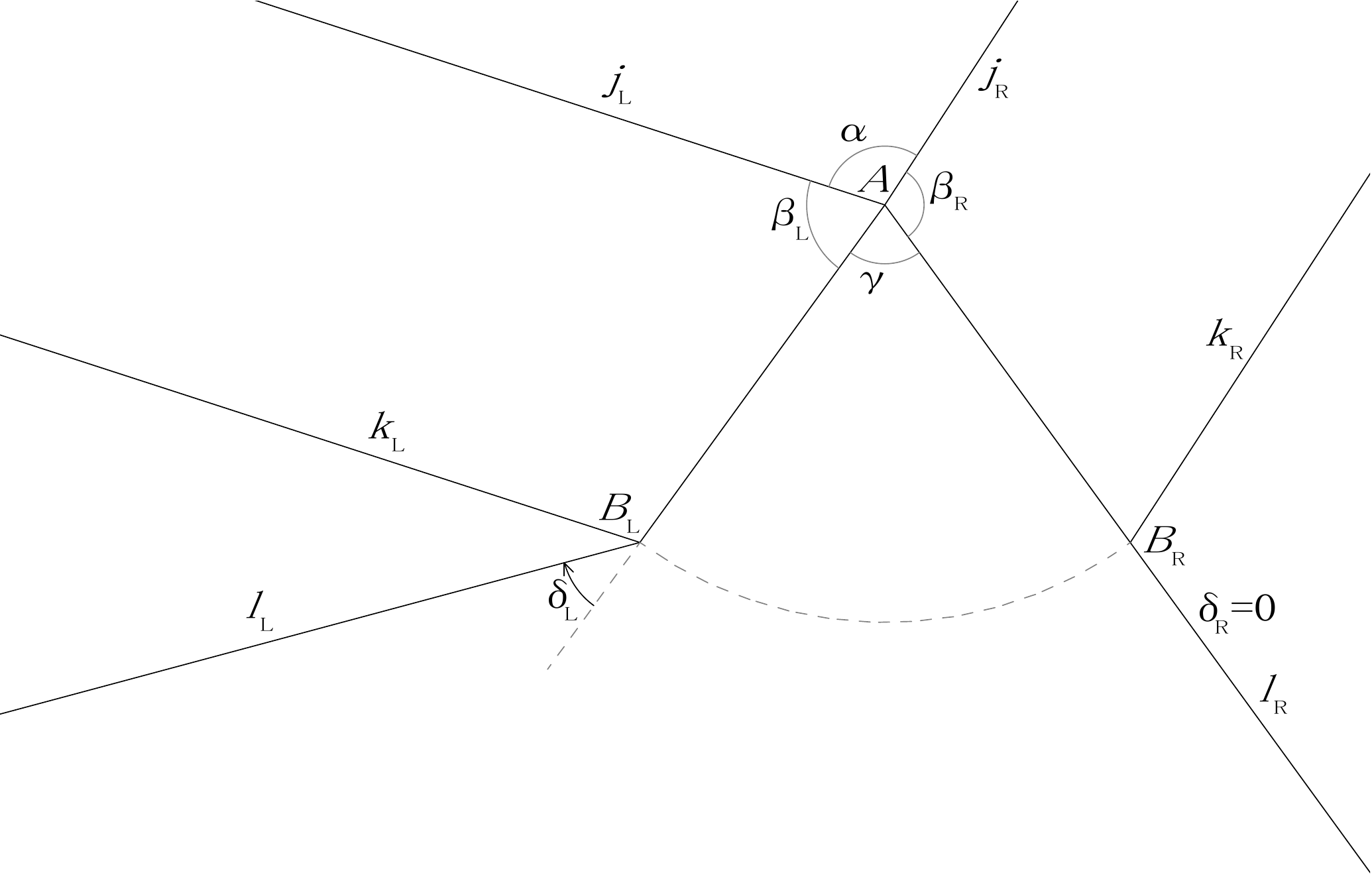}
\caption{Net of a $3$D object which we want to extrude with an origon gadget}
\label{fig:dev_0}
\end{figure}
\addtocounter{theorem}{1}
\begin{figure}[htbp]
\centering\includegraphics[width=0.75\hsize]{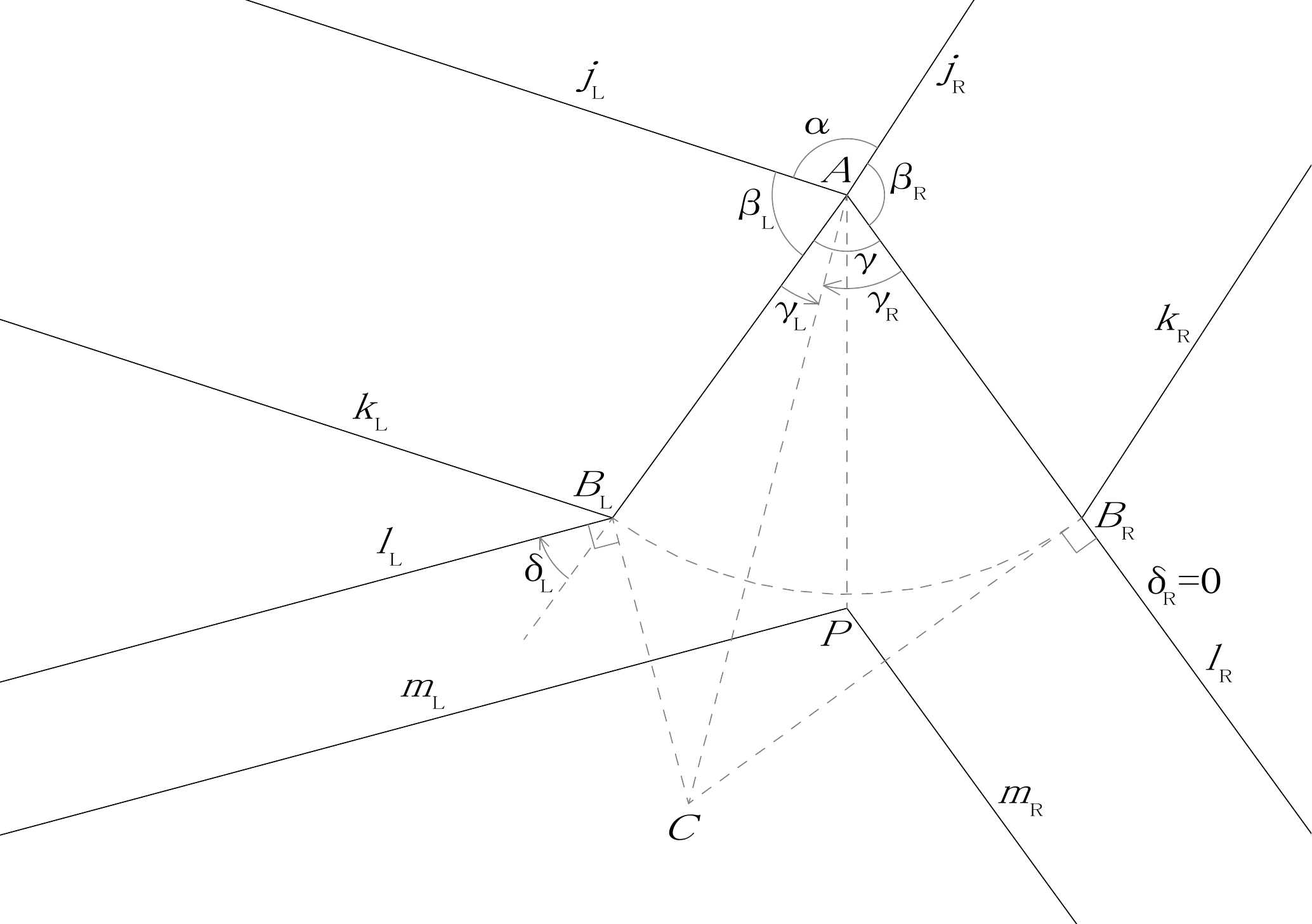}
\caption{Construction of creases of prescribed pleats for an origon gadget}
\label{fig:dev_1}
\end{figure}
\begin{remark}\label{rem:ext_delta}
Define $\gamma_\sigma\in (-\pi /2,\pi /2)$ by $\gamma_\Lt =\angle_+B_\Lt AC$ and $\gamma_\Rt =\angle_+B_\Rt AC$,
so that we have $\gamma_\Lt +\gamma_\Rt =\gamma$.
In our previous papers \cite{Doi19}, \cite{Doi20} and \cite{Doi21},
we assumed $\delta_\sigma <\pi /2$ in addition to $\delta_\sigma <\beta_\sigma$ in condition (iii.b).
This was because angle $\gamma_\sigma$ is given by
\begin{equation}\label{eq:gamma_sigma}
\begin{aligned}
\tan\gamma_\sigma&=\frac{1-\cos\gamma +\sin\gamma\tan\delta_{\sigma'}}{\sin\gamma +\cos\gamma\tan\delta_{\sigma'} +\tan\delta_\sigma},\quad\text{or}\\
\tan\left(\gamma_\sigma -\frac{\gamma}{2}\right)&=\frac{\tan\delta_{\sigma'}-\tan\delta_\sigma}{2+(\tan\delta_\sigma +\tan\delta_{\sigma'})/\tan (\gamma /2)},
\end{aligned}
\end{equation}
and we avoided the appearance of $\tan(\pi /2)$.
However, since we see from condition (iii.c) of Construction $\ref{const:cond}$ that $\delta_\Lt$ and $\delta_\Rt$ do not take $\pi /2$ simultaneously,
we can define $\tan\gamma_\sigma$ continuously even
at $\delta_\sigma =\pi /2$ or $\delta_{\sigma'}=\pi /2$
by the limit of $\tan\gamma_\sigma$ as $\delta_\sigma\to\pi /2$ or $\delta_{\sigma'}\to\pi /2$, so that we have
\begin{equation*}
\gamma_\sigma =\begin{dcases}0&\text{if }\delta_\sigma =\pi /2,\\
\gamma&\text{if }\delta_{\sigma'}=\pi /2\end{dcases}
\end{equation*}
as expected.
Hence in this paper we will not assume $\delta_\sigma <\pi /2$ in condition (iii.b).
Note that $\gamma_\sigma <0$ holds if and only if $\delta_\sigma >\pi /2$,
in which case we have $\gamma_{\sigma'}=\gamma -\gamma_\sigma >\gamma$ and $\delta_{\sigma'}<\pi /2$.
\end{remark}

We end this section with the organization of this paper.
In Section $\ref{sec:redef_crit}$, we give in Definition $\ref{def:crit_geom}$ a redefinition of critical angles originally introduced in \cite{Doi20},
and then recall the construction of positive origon gadgets in Construction $\ref{const:pos}$.
Then after proving the equivalence of the original and the new definition of the critical angles in Theorem $\ref{thm:crit_equiv}$,
we give in Theorem $\ref{thm:zeta_L+R}$ a simplified proof of the existence theorem of positive origons using the new definition of critical angles.
In Section $\ref{sec:const_neg_can_geom}$, we give in Construction $\ref{const:neg_can}$
a geometric (ruler and compass) construction of canonical negative origon gadgets.
The equivalence of the original numerical and the new geometric definition of the canonical dividing point is proved in Theorem $\ref{thm:coinc_D}$.
Then in Section $\ref{sec:exist_pos_can}$, we prove in Theorem $\ref{thm:exist_pos_can}$ the existence of canonical positive origons.
Finally, Section $\ref{sec:concl}$ gives our conclusion.
\section{Redefinition of the critical angles}\label{sec:redef_crit}
First we recall the original definition of the critical angles used in constructing positive origon gadgets,
which is given in \cite{Doi20}, Definition $3.2$.
\begin{definition}\label{def:crit_orig}
Consider a development as in Figure $\ref{fig:dev_1}$, for which we assume conditions (i), (ii) and (iii.a)--(iii.c) of Construction $\ref{const:cond}$.
Then we define the \emph{critical angles} $\zeta_\sigma\in (0,\gamma /2]$ for $\sigma =\Lt ,\Rt$ by the following construction,
where all procedures are done for both $\sigma =\Lt ,\Rt$.
\begin{enumerate}
\item Draw a ray $n_\sigma$ starting from $B_\sigma$ and going inside $\angle B_\Lt AB_\Rt$ so that
$\angle AB_\sigma n_\sigma =\pi -\beta_\sigma +\delta_\sigma <\pi$.
\item Let $Q_\sigma$ be the intersection point of ray $n_\sigma$ and polygonal chain $APm_\sigma$.
\item Then we define the critical angle $\zeta_\sigma$ by
\begin{equation*}
\zeta_\sigma =\angle B_\sigma AQ_\sigma .
\end{equation*}
\item Let $D_\sigma$ be a point in minor arc $\arc{B_\Lt B_\Rt}$ such that $\angle B_\sigma AD_\sigma =2\zeta_\sigma$.
Let us call $D_\sigma$ a \emph{critical dividing point}.
\item The critical angles $\zeta_\Lt ,\zeta_\Rt$ and the critical dividing points $D_\Lt ,D_\Rt$
constructed here are shown in Figure $\ref{fig:const_crit_geom}$,
where the possible range of radius $AD$ of minor arc $\arc{B_\Lt B_\Rt}$ with center $A$ is shaded, which is used in Construction $\ref{const:pos}$, $(1)$.
\end{enumerate}
\end{definition}
\addtocounter{theorem}{1}
\begin{figure}[htbp]
\centering\includegraphics[width=0.75\hsize]{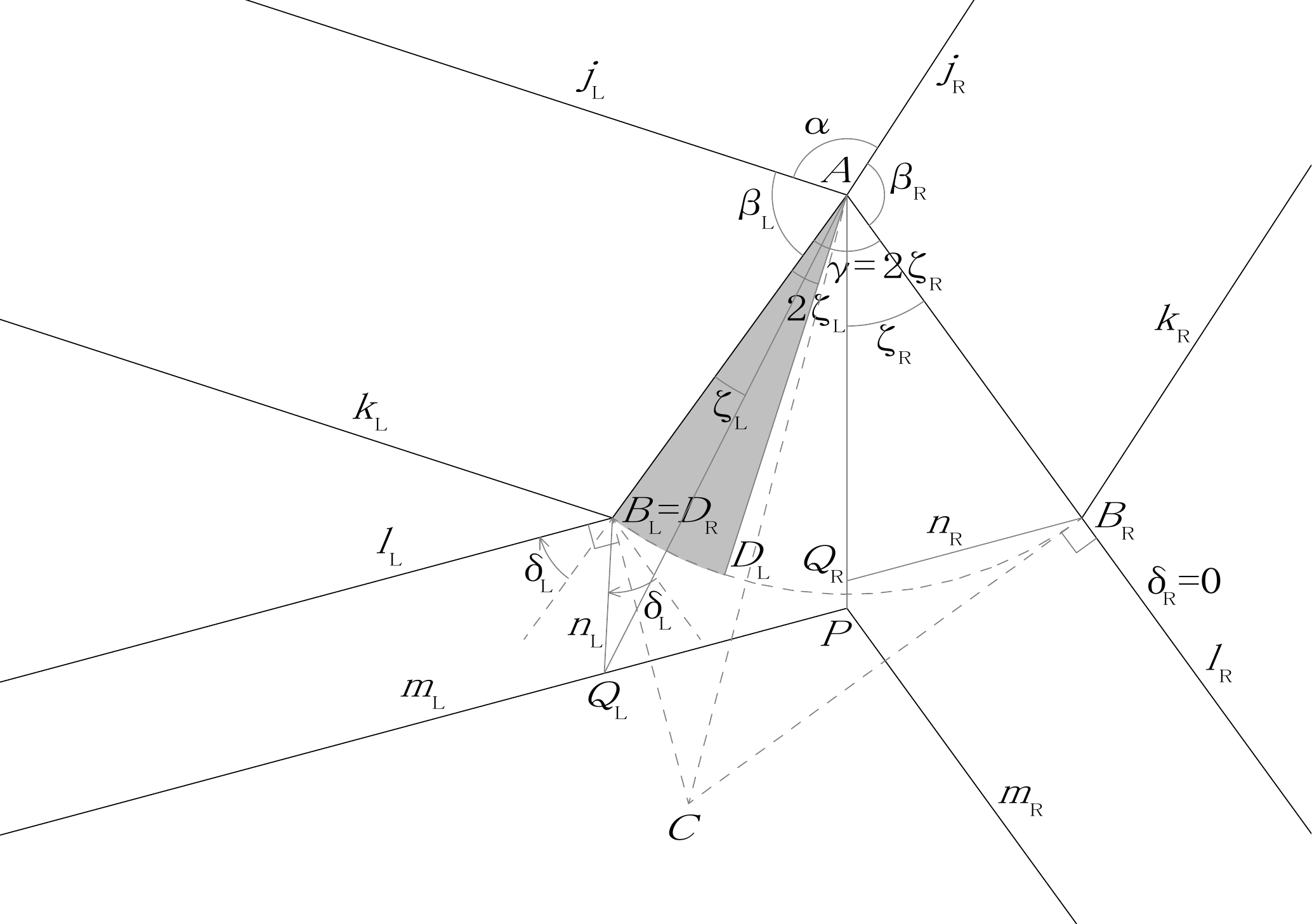}
\caption{Original construction of the critical angles $\zeta_\Lt ,\zeta_\Rt$ and the critical dividing points $D_\Lt ,D_\Rt$}
\label{fig:const_crit_orig}
\end{figure}
The critical angles are also defined numerically by the following result.
\begin{proposition}[\cite{Doi20}, Proposition $3.6$]\label{prop:crit_num}
The critical angle $\zeta_\sigma$ given in Definition $\ref{def:crit_orig}$ is numerically given by
\begin{equation}\label{eq:zeta_num}
\zeta_\sigma =\begin{dcases}
\tan^{-1}\left(\frac{1-d_\sigma /c'}{1/c+1 /c'+(1+d_\sigma /c)/b_\sigma}\right)
&\text{if }\beta_\sigma +\gamma /2+\delta_{\sigma'}\leqslant\pi ,\\
\gamma /2&\text{if }\beta_\sigma +\gamma /2+\delta_{\sigma'}\geqslant\pi ,
\end{dcases}
\end{equation}
where we set
\begin{equation*}
b_\sigma=\tan (\beta_\sigma -\delta_\sigma ),\quad c=\tan (\gamma /2),\quad c'=\tan (\gamma /2+\delta_\Lt +\delta_\Rt ),
\quad\text{and }d_\sigma =\tan\delta_\sigma .
\end{equation*}
In particular, if $\delta_\Lt =\delta_\Rt =0$, then we have $c=c'=\tan (\gamma /2)$ and $d_\Lt =d_\Rt =0$, so that
\begin{equation*}
\zeta_\sigma =\tan^{-1}\left(\frac{1}{2/\tan (\gamma /2)+1/\tan\beta_\sigma}\right) <\frac{\gamma}{2}.
\end{equation*}
\end{proposition}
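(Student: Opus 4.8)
The goal is to prove Proposition \ref{prop:crit_num}, which gives the numerical formula \eqref{eq:zeta_num} for the critical angle $\zeta_\sigma$ defined geometrically in Definition \ref{def:crit_orig}. Since the construction is symmetric in $\sigma$, I would fix $\sigma=\Lt$ (say) and work with a single side throughout, keeping both $\delta_\Lt$ and $\delta_\Rt$ as parameters since they enter through $c'$.

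\textbf{Setup via coordinates.} The plan is to place $A$ at the origin with the bisector $AP$ of $\angle B_\Lt AB_\Rt$ along a convenient axis; by Construction \ref{const:cond}, (3), $AP$ bisects the angle and is the perpendicular bisector of $B_\Lt B_\Rt$, so with $\angle B_\sigma AP=\gamma/2$ the two rays $AB_\Lt$ and $AB_\Rt$ are symmetric about this axis. I would then locate $C$: from Remark \ref{rem:ext_delta}, $\gamma_\sigma=\angle_+B_\sigma AC$ satisfies \eqref{eq:gamma_sigma}, so the ray $AC$ makes a known angle with $AB_\sigma$, and since $B_\sigma C\perp\ell_\sigma$ with $\angle AB_\sigma\ell_\sigma=\pi-\delta_\sigma$, the point $C$ and the line $m_\sigma$ (the perpendicular bisector of $B_\sigma C$, equivalently a crease line through $P$) are pinned down. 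The ray $n_\sigma$ out of $B_\sigma$ is prescribed by $\angle AB_\sigma n_\sigma=\pi-\beta_\sigma+\delta_\sigma$. Then $Q_\sigma$ is the first hit of $n_\sigma$ against the polygonal chain $A\to P\to$ (along $m_\sigma$), and $\zeta_\sigma=\angle B_\sigma AQ_\sigma$.

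\textbf{Two regimes.} The case split in \eqref{eq:zeta_num} corresponds exactly to whether $Q_\sigma$ lands on the segment $AP$ or on the ray part $m_\sigma$ beyond $P$. I would first show: $Q_\sigma$ lies on segment $AP$ (giving $\zeta_\sigma<\gamma/2$) precisely when $\beta_\sigma+\gamma/2+\delta_{\sigma'}<\pi$, whereas if $\beta_\sigma+\gamma/2+\delta_{\sigma'}\geqslant\pi$ the ray $n_\sigma$ meets the chain only after passing $P$ — but then, since $Q_\sigma$ would lie on $m_\sigma$ on the far side, one checks that the relevant intersection with the boundary of the admissible region is forced to the endpoint, so $\zeta_\sigma=\gamma/2$ and $D_\sigma$ coincides with the opposite endpoint of the minor arc. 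The inequality governing the regime should drop out of computing the signed angle at which $n_\sigma$ crosses the line $AP$: intersecting $n_\sigma$ with the axis $AP$ and asking that the intersection be at nonnegative distance along $AP$ (i.e.\ on the $P$-side of $A$) reduces, after using the triangle-angle relations, to $\beta_\sigma+\gamma/2+\delta_{\sigma'}\leqslant\pi$.

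\textbf{The formula in the generic case.} In the regime $\beta_\sigma+\gamma/2+\delta_{\sigma'}\leqslant\pi$, I would compute $Q_\sigma=n_\sigma\cap AP$ explicitly and then $\tan\zeta_\sigma=\tan\angle B_\sigma AQ_\sigma$. The cleanest route is to drop perpendiculars and express everything through the abbreviations $b_\sigma=\tan(\beta_\sigma-\delta_\sigma)$, $c=\tan(\gamma/2)$, $c'=\tan(\gamma/2+\delta_\Lt+\delta_\Rt)$, $d_\sigma=\tan\delta_\sigma$: the slope of $\ell_\sigma$ (hence of $B_\sigma C$, hence the position of $C$ along $AC$) brings in $d_\sigma$ and $c'$ through $\gamma_\sigma$ via \eqref{eq:gamma_sigma}; the slope of $n_\sigma$ brings in $\beta_\sigma-\delta_\sigma$, i.e.\ $b_\sigma$; and the bisector $AP$ brings in $c$. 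Setting the line $n_\sigma$ equal to $AP$ and reading off the tangent of the subtended angle, the algebra should collapse to
\[
\tan\zeta_\sigma=\frac{1-d_\sigma/c'}{1/c+1/c'+(1+d_\sigma/c)/b_\sigma},
\]
which is \eqref{eq:zeta_num}. The specialization $\delta_\Lt=\delta_\Rt=0$ (so $c'=c$, $d_\sigma=0$) then gives $\tan\zeta_\sigma=1/(2/\tan(\gamma/2)+1/\tan\beta_\sigma)<1/(2/\tan(\gamma/2))=\tan(\gamma/2)$, hence $\zeta_\sigma<\gamma/2$, recovering the last displayed line.

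\textbf{Main obstacle.} The conceptual content is light — it is essentially one linear-intersection computation plus a case analysis — so the real work is bookkeeping: choosing coordinates so that the many angles ($\gamma/2$, $\gamma_\sigma$, $\beta_\sigma-\delta_\sigma$, $\pi-\delta_\sigma$, $\pi-\beta_\sigma+\delta_\sigma$) combine without sign errors, and verifying that the messy rational expression in the slopes really simplifies to the stated closed form in $b_\sigma,c,c',d_\sigma$. I expect the identity $\tan(\gamma/2+\delta_\Lt+\delta_\Rt)=c'$ to be doing the crucial compression: the combination $\delta_\Lt+\delta_\Rt$ appears because the ray $n_\sigma$'s intersection with $AP$ sees the contribution of \emph{both} pleat offsets through $C$ and the bisector. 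Secondarily, one must be careful that the claimed range $\zeta_\sigma\in(0,\gamma/2]$ is respected — positivity of $\zeta_\sigma$ follows from $Q_\sigma$ lying strictly inside $\angle B_\Lt AB_\Rt$ under conditions (i)--(iii.c), and the upper bound is exactly the regime boundary — and that $Q_\sigma$ is genuinely the \emph{first} intersection of $n_\sigma$ with the chain $APm_\sigma$, which I would justify by noting $n_\sigma$ enters the region bounded by $AB_\sigma$, $AP$ and $m_\sigma$ at $B_\sigma$ and is a straight ray, so it exits through the opposite boundary at a single point.
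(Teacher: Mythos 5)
This proposition is cited in the paper from \cite{Doi20}, Proposition~$3.6$, and the present paper does not reproduce a proof of it; so I am evaluating your sketch on its own merits rather than against an in-text proof.

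Your overall plan — a coordinate set-up with $A$ at the origin and the bisector $AP$ as an axis, a two-case split keyed to the sign of $\beta_\sigma+\gamma/2+\delta_{\sigma'}-\pi$, and a linear-intersection computation collapsed into the abbreviations $b_\sigma,c,c',d_\sigma$ — is a sensible route, and your handling of the specialization $\delta_\Lt=\delta_\Rt=0$ and of the uniqueness of the intersection of $n_\sigma$ with the chain is fine.

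However, there is a genuine error in the way you assign the two regimes to the two pieces of the polygonal chain $APm_\sigma$, and it is load-bearing. By Construction~$\ref{const:cond}$,~$(3)$, ray $AP$ is the bisector of $\angle B_\Lt AB_\Rt$, so \emph{every} point $Q$ on segment $AP$ (other than $A$) satisfies $\angle B_\sigma AQ=\gamma/2$. Hence ``$Q_\sigma$ lies on segment $AP$'' forces $\zeta_\sigma=\angle B_\sigma AQ_\sigma=\gamma/2$ exactly — it cannot give $\zeta_\sigma<\gamma/2$ as you claim. The generic regime $\beta_\sigma+\gamma/2+\delta_{\sigma'}\leqslant\pi$, in which the formula produces $\zeta_\sigma<\gamma/2$, must therefore correspond to $Q_\sigma$ lying on the ray $m_\sigma$ \emph{beyond} $P$ (where $\angle B_\sigma AQ_\sigma$ genuinely varies), and the boundary regime $\beta_\sigma+\gamma/2+\delta_{\sigma'}\geqslant\pi$ is the one where $n_\sigma$ hits segment $AP$ and one reads off $\zeta_\sigma=\gamma/2$ and $D_\sigma=B_{\sigma'}$. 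You have the two cases exactly swapped.

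This is not a cosmetic slip: you propose to ``compute $Q_\sigma=n_\sigma\cap AP$ explicitly and then $\tan\zeta_\sigma=\tan\angle B_\sigma AQ_\sigma$'' in the generic regime, but that computation must return $\tan(\gamma/2)=c$ identically, not the rational expression in $b_\sigma,c,c',d_\sigma$. To obtain the stated formula you need to intersect $n_\sigma$ with the line $m_\sigma$ (the perpendicular bisector of $B_\sigma C$, translated to a ray from $P$ in the direction of $\ell_\sigma$), and it is that intersection point whose subtended angle at $A$ gives $\zeta_\sigma$. The quantity $c'=\tan(\gamma/2+\delta_\Lt+\delta_\Rt)$ then enters through the position of $P$ and the direction of $m_\sigma$ — your intuition that $\delta_\Lt+\delta_\Rt$ ``appears through $C$ and the bisector'' is on the right track, but the line you must intersect with is $m_\sigma$, not $AP$. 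Fix the regime assignment and the target line, and the rest of your bookkeeping plan should carry through.

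Secondarily, your justification of the boundary case (``the relevant intersection with the boundary of the admissible region is forced to the endpoint'') is vague and, as it stands, attached to the wrong regime; once the cases are swapped, the clean argument is simply that $Q_\sigma\in AP\setminus\{A\}$ automatically yields $\angle B_\sigma AQ_\sigma=\gamma/2$, whence $D_\sigma=B_{\sigma'}$, with no further ``forcing'' needed.
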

To introduce a new geometric definition of the critical angles, we shall prepare the following.
\begin{definition}\label{def:angles}
Let $c_A$ be the circle with center $A$ through $B_\Lt$ and $B_\Rt$,
and let $\arc{B_\Lt B_\Rt}$ (resp. $\arc{B_\Rt B_\Lt}$) be the minor (resp. major) arc of circle $c_A$ with endpoints $B_\Lt$ and $B_\Rt$
(see Convention $\ref{conv}$, $(4)$).
We define angle functions $\phi_\sigma$ on minor arc $\arc{B_\Lt B_\Rt}$ with range $[0,\gamma ]$
and $\phi'_\sigma$ on major arc $\arc{B_\Rt B_\Lt}$ with range $[0,2\pi -\gamma ]$ by
\begin{align*}
\phi_\sigma (D)&=\angle B_\sigma AD=
\begin{dcases}\angle_+B_\sigma AD&\text{if }\sigma =\Lt ,\\
\angle_-B_\sigma AD&\text{if }\sigma =\Rt\end{dcases}\quad\text{for }D\in\arc{B_\Lt B_\Rt},\quad\text{and}\\
\phi'_\sigma (D)&=
\begin{dcases}\angle_-B_\sigma AD&\text{if }\sigma =\Lt ,\\
\angle_+B_\sigma AD&\text{if }\sigma =\Rt\end{dcases}\quad\text{for }D\in\arc{B_\Rt B_\Lt}.
\end{align*}
Then we have $\phi_\Lt +\phi_\Rt =\gamma$ on minor arc $\arc{B_\Lt B_\Rt}$ and $\phi'_\Lt +\phi'_\Rt =2\pi -\gamma$ on major arc $\arc{B_\Rt B_\Lt}$.
We can regard $\phi_\sigma$ (resp. $\phi'_\sigma$) as a distance from $B_\sigma$ on minor arc $\arc{B_\Lt B_\Rt}$ (resp. major arc $\arc{B_\Rt B_\Lt}$).
Also, we define angle functions $\psi_\sigma$ with range $[-\gamma_{\sigma'},2\pi -\gamma_{\sigma'}]$ and $\rho_\sigma$ on circle $c_A$ by
\begin{align*}
\psi_\sigma (D)&=\begin{dcases}\angle_-CAD&\text{if }\sigma =\Lt ,\\
\angle_+CAD&\text{if }\sigma =\Rt ,\end{dcases}\quad\text{and}\\
\rho_\sigma (D)&=\angle ACB_\sigma -\angle DCB_\sigma =
\begin{dcases}\angle_+ACD&\text{if }\sigma =\Lt ,\\
\angle_-ACD&\text{if }\sigma =\Rt .\end{dcases}
\end{align*}
Then we have $\phi_\sigma +\psi_\sigma =\gamma_\sigma$ and $\psi_\Lt +\psi_\Rt =0$ on minor arc $\arc{B_\Lt B_\Rt}$, and
$\rho_\sigma (\arc{B_\Lt B_\Rt})=[\gamma_{\sigma'}+\delta_{\sigma'}-\pi /2,\pi /2-\gamma_\sigma -\delta_\sigma ]$.
Note that $\rho_\sigma$ can be expressed in terms of $\psi_\sigma$ by
\begin{equation}\label{eq:rho}
\rho_\sigma =\frac{\sin\psi_\sigma}{r-\cos\psi_\sigma}.
\end{equation}
where $r$ is defined by $r=\norm{AC}/\norm{AB}$ and represented in terms of $\gamma_\sigma$ and $\delta_\sigma$ by
\begin{equation}\label{eq:r}
r=\frac{1}{\cos\gamma_\sigma -\sin\gamma_\sigma\tan\delta_\sigma}\quad\text{for }\delta_\sigma\neq\frac{\pi}{2}.
\end{equation}
Thus if $\delta_\sigma =\pi /2$, then we have $\delta_{\sigma'}\neq 0$ by condition (iii.c) of Construction $\ref{const:cond}$, so that
\begin{equation*}
r=\frac{1}{\cos\gamma -\sin\gamma\tan\delta_{\sigma'}}
\end{equation*}
is well-defined.
The angles defined here are shown in Figure $\ref{fig:angles}$.
\end{definition}
\addtocounter{theorem}{1}
\begin{figure}[htbp]
\centering\includegraphics[width=0.75\hsize]{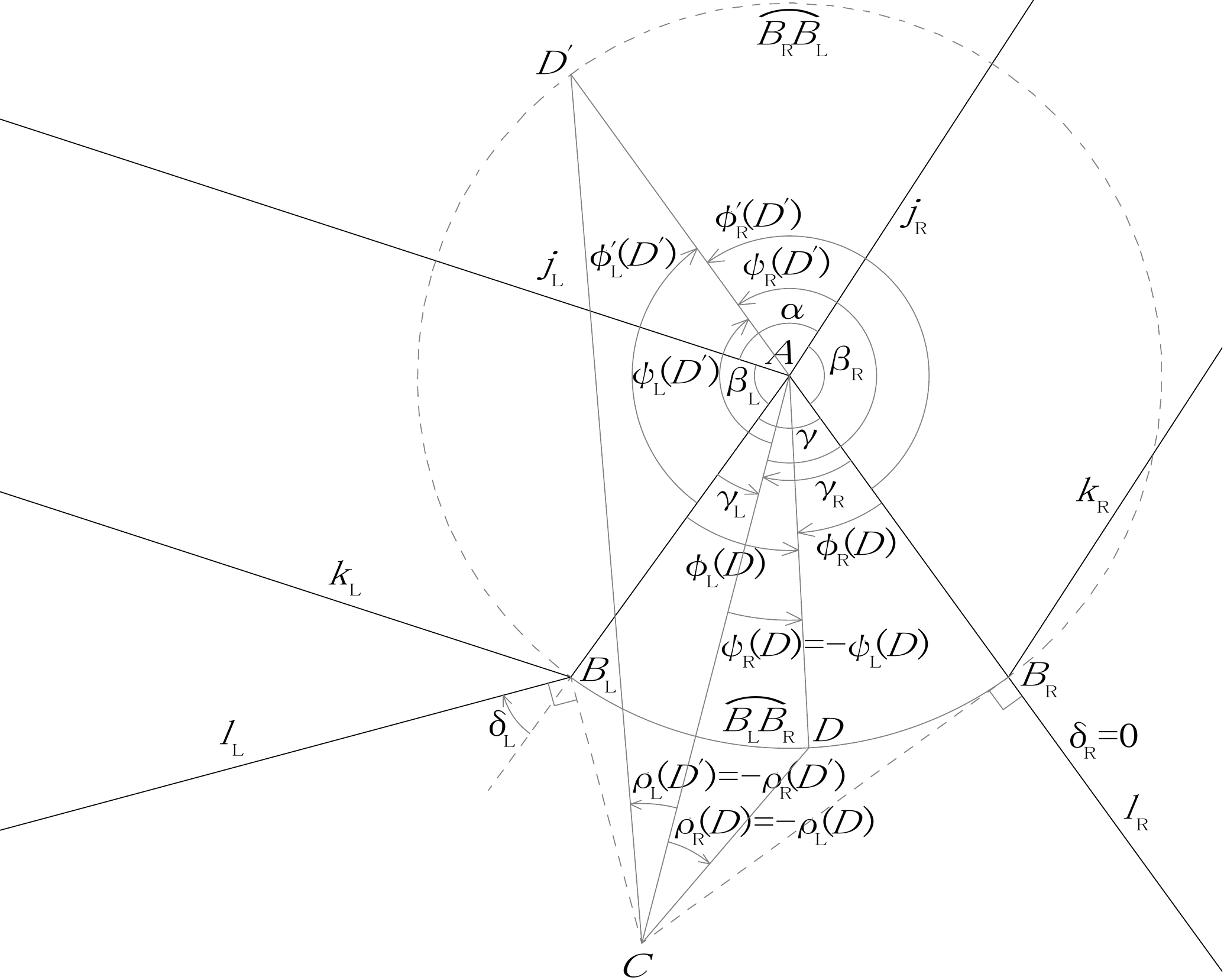}
\caption{Angles given in Definition $\ref{def:angles}$}
\label{fig:angles}
\end{figure}

Now we shall give a new geometric definition of the critical angles.
\begin{definition}\label{def:crit_geom}
We consider a net as in Figure $\ref{fig:dev_0}$.
As in Definition $\ref{def:angles}$, let $c_A$ be a circle with center $A$ through $B_\Lt$ and $B_\Rt$,
$\arc{B_\Lt B_\Rt}$ and $\arc{B_\Rt B_\Lt}$ be a minor and a major arc of $c_A$ respectively, and $\phi_\sigma ,\phi'_\sigma$ be angle functions.
Then we define the critical angles $\zeta_\sigma\in (0,\gamma /2]$ for $\sigma =\Lt ,\Rt$ as follows, where we regard $\sigma$ as taking both $\Lt$ and $\Rt$.
\begin{enumerate}
\item Let $D'_\sigma$ be the intersection point of circle $c_A$ and a perpendicular to $k_\sigma$ through $B_\sigma$.
Equivalently, we can define a point $D'_\sigma$ so that $\phi'_\sigma (D'_\sigma )=2\beta_\sigma$.
Then let $B'$ be the intersection point of segments $B_\Lt D'_\Lt$ and $B_\Rt D'_\Rt$.
\item Let $B'_\sigma$ be the intersection point of major arc $\arc{B_\Rt B_\Lt}$ and an extension of segment $CB_\sigma$.
Alternatively, we can define a point $B'_\sigma$ so that $\phi'_\sigma (D'_\sigma )=2\delta_\sigma$
by $\eqref{eq:B'_sigma}$ in Proposition $\ref{prop:rel_B'_D'}$.
\item If $D'_\sigma$ lies between $B'_\Lt$ and $B'_\Rt$ in major arc $\arc{B_\Rt B_\Lt}$,
then we define $D_\sigma$ to be the intersection point of minor arc $\arc{B_\Lt B_\Rt}$ and segment $CD'_\sigma$.
If $D'_\sigma$ lies between $B_{\sigma'}$ and $B'_{\sigma'}$ in major arc $\arc{B_\Rt B_\Lt}$, then we define $D_\sigma =B_{\sigma'}$.
We call $D_\sigma$ a \emph{critical dividing point}.
(We can also skip procedure $(2)$ and define $D_\sigma$ as the intersection point of minor arc $\arc{B_\Lt B_\Rt}$ and segment $CD'_\sigma$ if it exists,
and $B_{\sigma'}$ otherwise, although the construction of $B_\sigma$ in $(2)$ tells us in advance
whether segment $CD'_\sigma$ intersects minor arc $\arc{B_\Lt B_\Rt}$ or not.
See also Proposition $\ref{prop:rel_B'_D'}$ for the condition that $CD'_\sigma$ intersects $\arc{B_\Lt B_\Rt}$.)
\item Then we define the critical angle $\zeta_\sigma$ by $\zeta_\sigma =\phi_\sigma (D_\sigma )/2$.
\item The critical angles $\zeta_\sigma$ and the critical dividing points $D_\sigma$ constructed here are shown in Figure $\ref{fig:const_crit_geom}$,
where the possible range of radius $AD$ of minor arc $\arc{B_\Lt B_\Rt}$ with center $A$ is shaded, which is used in Construction $\ref{const:pos}$, $(1)$.
\end{enumerate}
\end{definition}
\addtocounter{theorem}{1}
\begin{figure}[htbp]
\centering\includegraphics[width=0.75\hsize]{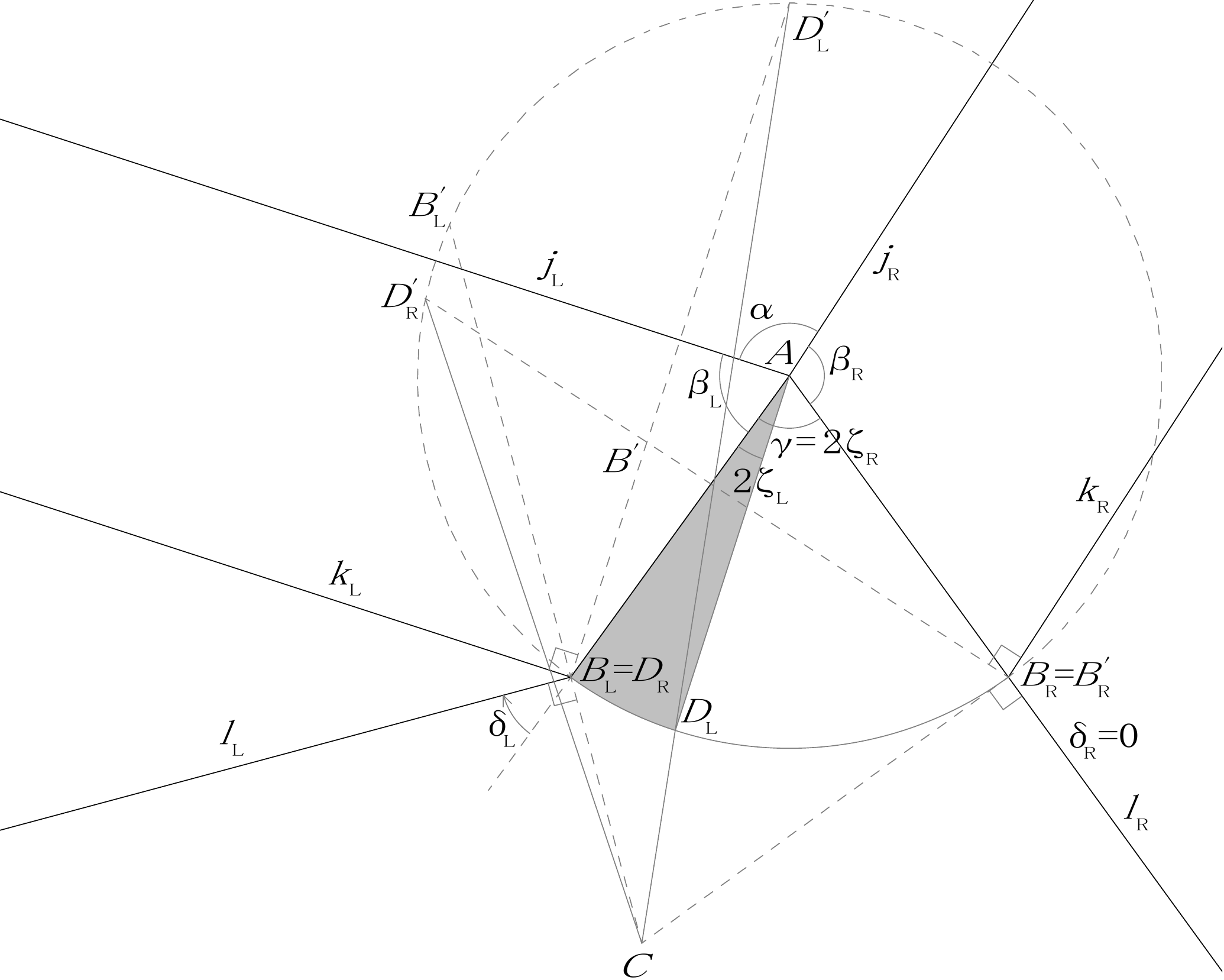}
\caption{New geometric construction of the critical angles $\zeta_\Lt ,\zeta_\Rt$ and the critical dividing points $D_\Lt ,D_\Rt$}
\label{fig:const_crit_geom}
\end{figure}
Definition $\ref{def:crit_geom}$ has advantages over Definition $\ref{def:crit_orig}$ in that
it gives directly the possible range $\arc{D_\Rt D_\Lt}$ of the dividing point $D$ used in Construction $\ref{const:pos}$, $(1)$,
and the construction of $D'_\sigma$ in Definition $\ref{def:crit_orig}$ is simpler than that of $Q_\sigma$ in Definition $\ref{def:crit_orig}$.
We will prove in Theorem $\ref{thm:crit_equiv}$
that the critical angles defined in Definitions $\ref{def:crit_orig}$ and $\ref{def:crit_geom}$ are identical.

Then the construction of positive origon gadgets is given as follows.
\begin{construction}\label{const:pos}
Consider a development as in Figure $\ref{fig:dev_1}$, for which we assume conditions (i), (ii) and (iii.a)--(iii.c) of Construction $\ref{const:cond}$.
Then the crease pattern of a positive origon gadget with prescribed simple outgoing pleats $(\ell_\Lt ,m_\Lt)$ and $(\ell_\Rt ,m_\Rt )$
is constructed as follows, where we regard $\sigma$ as taking both $\Lt$ and $\Rt$.
\begin{enumerate}
\item Let $\zeta_\Lt$ and $\zeta_\Rt$ be the critical angles given in either Definition $\ref{def:crit_orig}$, Proposition $\ref{prop:crit_num}$
or Definition $\ref{def:crit_geom}$.
Alternatively, let $D_\Lt$ and $D_\Rt$ be the points given in either Definition $\ref{def:crit_orig}$ or Definition $\ref{def:crit_geom}$.
Choose a \emph{dividing point} $D$ in minor arc $\arc{B_\Lt B_\Rt}$ with center $A$ so that either of the following conditions holds:
\begin{itemize}
\item $\phi_\sigma =\phi_\sigma (D)=\angle B_\sigma AD$ satisfies
\begin{equation*}
\phi_\Lt\in [\gamma -2\zeta_\Rt ,2\zeta_\Lt ]\cap (0,\gamma ),\quad\text{or equivalently,}\quad\phi_\Rt\in [\gamma -2\zeta_\Lt ,2\zeta_\Rt ]\cap (0,\gamma );
\quad\text{or}
\end{equation*}
\item $D\in\arc{D_\Rt D_\Lt}\setminus\{B_\Lt ,B_\Rt\}$.
\end{itemize}
If $\phi_\sigma =2\zeta_\sigma$, or equivalently, $D=D_\sigma$ for either $\sigma$, then the resulting origon gadget is said to be \emph{critical}.
\item Let $E_\sigma$ be the intersection point of $m_\sigma$ and the bisector of $\angle B_\sigma AD_\can$,
and redefine $m_\sigma$ to be a ray starting from $E_\sigma$ and going in the same direction as $\ell_\sigma$.
\item Determine a point $G_\sigma$ on segment $A E_\sigma$ such that $\angle A B_\sigma G_\sigma=\pi -\beta_\sigma$.
\item If $\delta_\sigma >0$, then determine a point $H_\sigma$ on segment $A E_\sigma$ such that $\angle E_\sigma B_\sigma H_\sigma =\delta_\sigma$.
\item The crease pattern is shown as the solid lines in Figure $\ref{fig:pos_CP}$ for a general case,
and Figure $\ref{fig:pos_crit_CP}$ for a critical case.
Also, the assignment of mountain and valley folds is given in Table $\ref{tbl:pos_assign}$.
\end{enumerate}
\end{construction}
\addtocounter{theorem}{1}
\begin{figure}[htbp]
\centering\includegraphics[width=0.75\hsize]{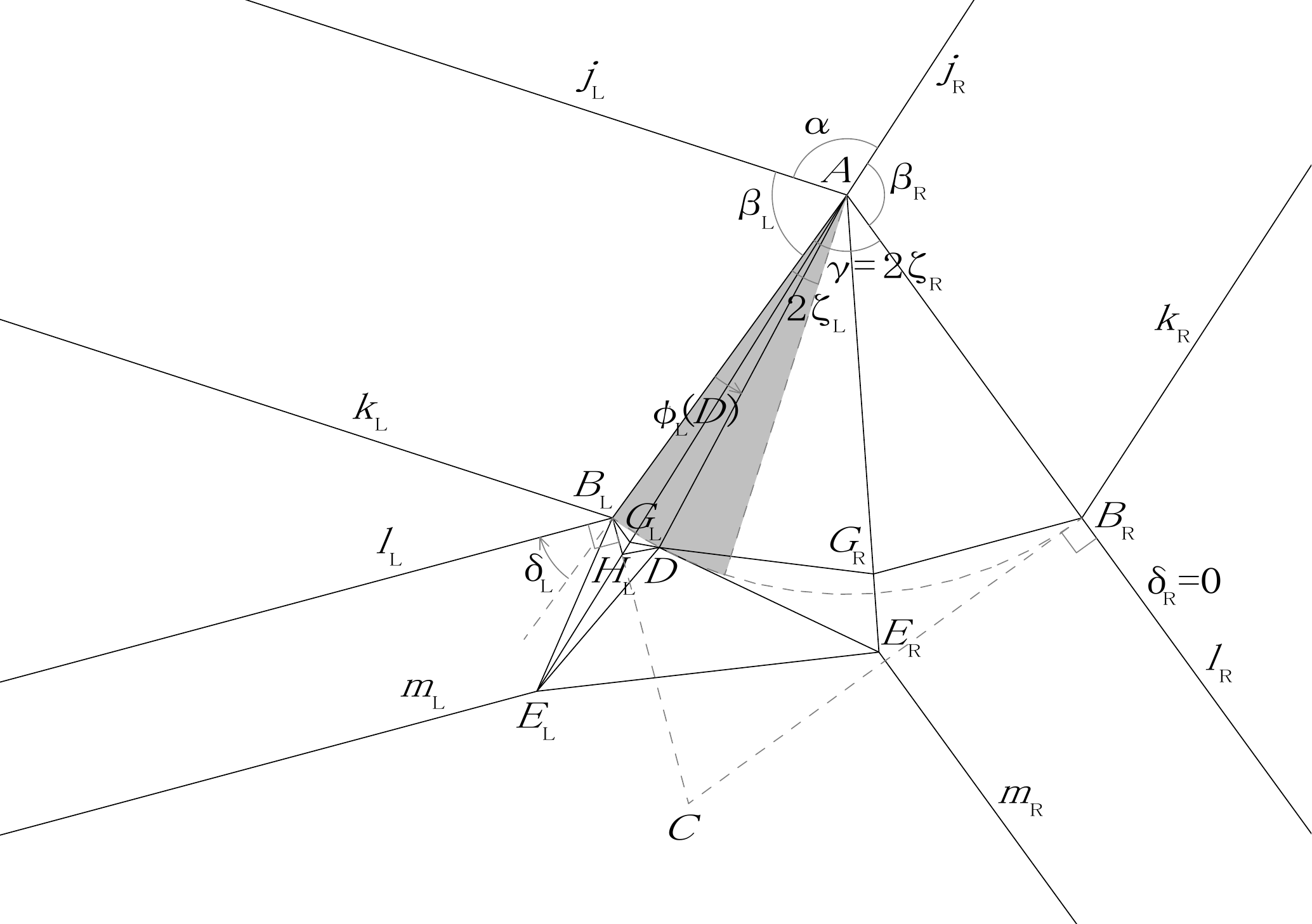}
\caption{Crease pattern of a positive origon gadget, where the possible range of radius $AD$ is shaded}
\label{fig:pos_CP}
\end{figure}
\addtocounter{theorem}{1}
\begin{figure}[htbp]
\centering\includegraphics[width=0.75\hsize]{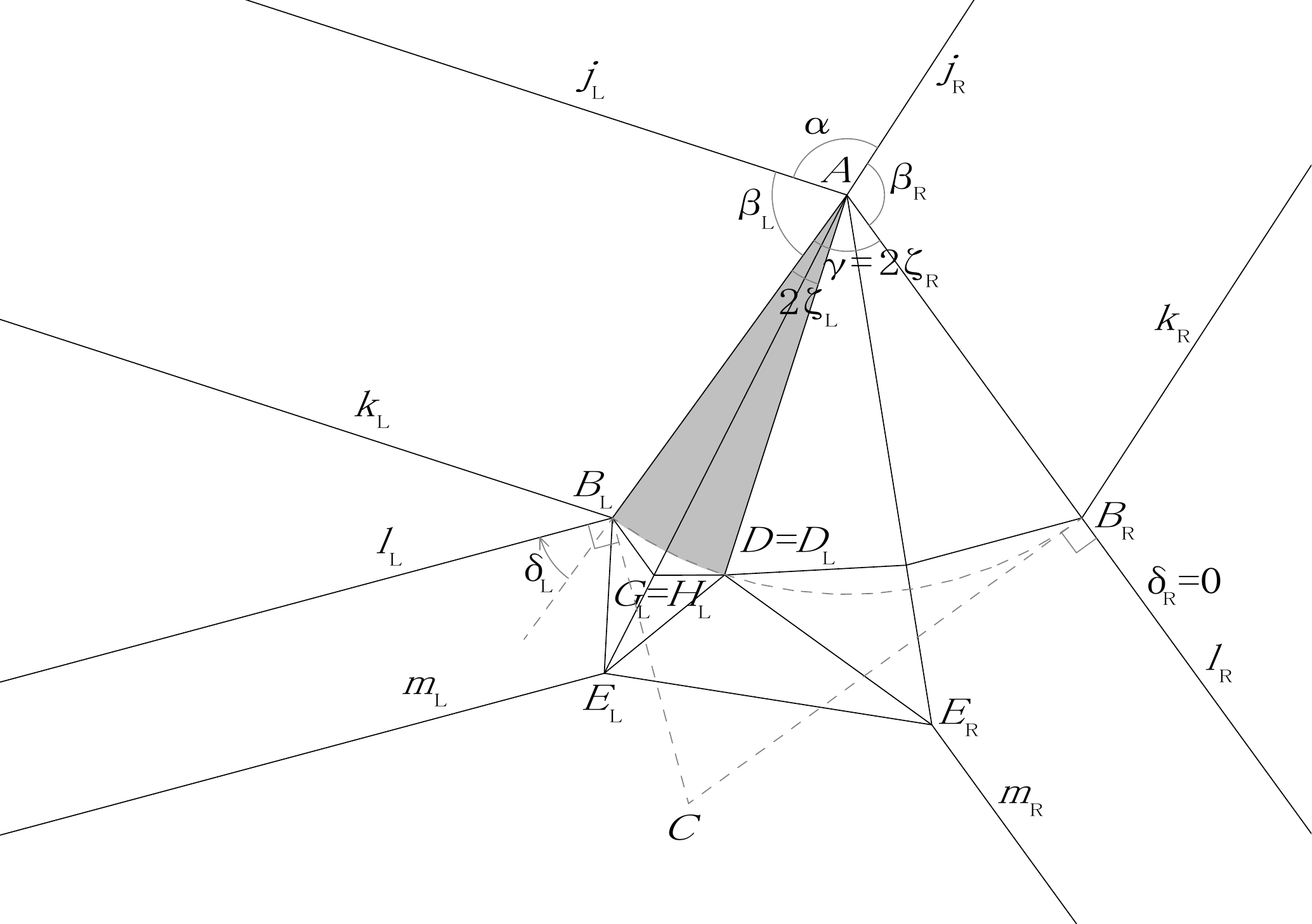}
\caption{Crease pattern of a critical positive origon gadget with $D=D_\Lt$}
\label{fig:pos_crit_CP}
\end{figure}
\renewcommand{\arraystretch}{1.5}
\addtocounter{theorem}{1}
\begin{table}[h]
\begin{tabular}{c|c|c|c|c|c}
&common&\multicolumn{2}{c|}{$\phi_\sigma /2<\zeta_\sigma$ or $D\neq D_\sigma$, and}
&\multicolumn{2}{c}{$\phi_\sigma /2=\zeta_\sigma$ or $D=D_\sigma$, and}\\
\cline{3-6}
&creases&$\delta_\sigma=0$&$\delta_\sigma>0$&$\delta_\sigma=0$&$\delta_\sigma>0$\\
\hline
mountain&$j_\sigma ,\ell_\sigma ,$&\multicolumn{2}{c|}{$B_\sigma G_\sigma$}&$B_\sigma E_\sigma =B_\sigma G_\sigma ,$&$B_\sigma E_\sigma ,$\\
\cline{3-4}
folds&$AB_\sigma ,AD$&$DE_\sigma$&$B_\sigma E_\sigma ,DH_\sigma$&$DE_\sigma =DG_\sigma$&$DG_\sigma =DH_\sigma$\\
\hline
valley&$k_\sigma ,m_\sigma ,$&\multicolumn{2}{c|}{$DG_\sigma$}&\multirow{2}{*}{---}
&\multirow{2}{*}{$B_\sigma G_\sigma =B_\sigma H_\sigma$}\\
\cline{3-4}
folds&$AE_\sigma ,E_\Lt E_\Rt$&---&$B_\sigma H_\sigma$&&
\end{tabular}\vspace{0.5cm}
\caption{Assignment of mountain and valley folds to a positive origon gadget}
\label{tbl:pos_assign}
\end{table}
According to \cite{Doi20}, Setion $4$, the condition that a positive origon gadget is constructible on the side of $\sigma$ is given by
\begin{equation}\label{ineq:constrbil_angle}
\angle AB_\sigma E_\sigma\leqslant\beta_\sigma -\delta_\sigma .
\end{equation}
We can rewrite $\eqref{ineq:constrbil_angle}$ to state the constructibility theorem of positive origons as follows.
\begin{theorem}[\cite{Doi20}, Theorems $4.1$ and $4.3$]\label{thm:constrbil_pos}
Let $D$ be a dividing point chosen in procedure $(1)$ of Construction $\ref{const:pos}$, and
let $\phi_\sigma =\phi_\sigma (D)$, $\psi_\sigma =\psi_\sigma (D)$ and $\rho_\sigma =\rho_\sigma (D)$ be as in Definition $\ref{def:angles}$.
The constructibility condition $\eqref{ineq:constrbil_angle}$ of a positive origon gadget is equivalent to both of the following conditions:
\begin{gather}
\phi_\sigma\leqslant 2\zeta_\sigma ;\quad\text{and}\label{ineq:constrbil_phi}\\
\beta_\sigma +\frac{\gamma_\sigma}{2}+\frac{\psi_\sigma}{2}+\rho_\sigma\geqslant\frac{\pi}{2},\label{ineq:constrbil_psi}
\end{gather}
where $\zeta_\sigma$ is the critical angle given either geometrically by Definition $\ref{def:crit_orig}$ or $\ref{def:crit_geom}$,
or numerically by Proposition $\ref{prop:crit_num}$.
In particular, the equality of $\eqref{ineq:constrbil_phi}$ is equivalent to that of $\eqref{ineq:constrbil_psi}$.
\end{theorem}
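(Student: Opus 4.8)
The statement repackages \cite{Doi20}, Theorems $4.1$ and $4.3$, where constructibility of the positive origon on the side of $\sigma$ was analysed and shown to be governed by exactly the inequalities \eqref{ineq:constrbil_phi} and \eqref{ineq:constrbil_psi} with the critical angle of Definition \ref{def:crit_orig}. So the plan is to supply three bridges: (a) recall from \cite{Doi20}, Section $4$, that the constructibility criterion is the angle inequality \eqref{ineq:constrbil_angle} --- this is how $E_\sigma$ enters Construction \ref{const:pos}, the side face on the $\sigma$-side being foldable flat precisely when the crease $B_\sigma E_\sigma$ does not overshoot, i.e.\ when $\angle AB_\sigma E_\sigma\leqslant\beta_\sigma-\delta_\sigma$; (b) invoke \cite{Doi20}, Theorems $4.1$ and $4.3$, for the two equivalences; and (c) pass from the original $\zeta_\sigma$ to the one of Definition \ref{def:crit_geom} via Theorem \ref{thm:crit_equiv}. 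The final sentence is then automatic from the strict monotonicity used in (b).

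For a self-contained treatment I would argue directly. Fix $\sigma$ and let $D$ sweep the minor arc $\arc{B_\Lt B_\Rt}$, with $\phi_\sigma=\phi_\sigma(D)$, $\psi_\sigma=\psi_\sigma(D)$, $\rho_\sigma=\rho_\sigma(D)$ as in Definition \ref{def:angles}. Since $E_\sigma$ lies on $m_\sigma$ we have $\norm{E_\sigma B_\sigma}=\norm{E_\sigma C}$; combining the isosceles triangle $B_\sigma CE_\sigma$ with the triangle $ACE_\sigma$, in which $\norm{AC}/\norm{AB}=r$ is given by \eqref{eq:r}, $\angle B_\sigma AE_\sigma=\phi_\sigma/2$, and hence $\angle CAE_\sigma=\gamma_\sigma/2+\psi_\sigma/2$ by $\phi_\sigma+\psi_\sigma=\gamma_\sigma$, one obtains a closed form for $\angle AB_\sigma E_\sigma$ in which the contribution of the triangle at $C$ is exactly $\rho_\sigma$ through \eqref{eq:rho}. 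Substituting this into $\angle AB_\sigma E_\sigma\leqslant\beta_\sigma-\delta_\sigma$ and rearranging yields \eqref{ineq:constrbil_psi}; this reproduces the computation of \cite{Doi20}, Theorem $4.1$, and simultaneously shows that the two sides of \eqref{ineq:constrbil_psi} are equal iff those of \eqref{ineq:constrbil_angle} are.

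It then remains to match \eqref{ineq:constrbil_angle} with the arc condition \eqref{ineq:constrbil_phi}, and the key is monotonicity: as $\phi_\sigma$ increases through $(0,\gamma)$, the bisector ray of $\angle B_\sigma AD$ rotates monotonically, its intersection $E_\sigma$ with the fixed line $m_\sigma$ moves monotonically along $m_\sigma$, and $\angle AB_\sigma E_\sigma$ is a strictly increasing function of $\phi_\sigma$. Hence \eqref{ineq:constrbil_angle} holds iff $\phi_\sigma$ does not exceed the threshold $\phi_\sigma^{*}$ at which $\angle AB_\sigma E_\sigma=\beta_\sigma-\delta_\sigma$, and the content of Definition \ref{def:crit_orig} --- the construction of $n_\sigma$, $Q_\sigma$ and $D_\sigma$ --- is precisely the assertion that this threshold is $\phi_\sigma^{*}=2\zeta_\sigma$ (equivalently $D=D_\sigma$). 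The saturating case $\zeta_\sigma=\gamma/2$ (the extremal point falling at or past $B_{\sigma'}$, so that no genuine threshold occurs inside the arc) must be treated separately: there \eqref{ineq:constrbil_phi} reads $\phi_\sigma\leqslant\gamma$, which always holds, consistently with $\angle AB_\sigma E_\sigma<\beta_\sigma-\delta_\sigma$ throughout. By Theorem \ref{thm:crit_equiv} the same $\zeta_\sigma$ is produced by Definition \ref{def:crit_geom}, so \eqref{ineq:constrbil_phi} may be read with either critical angle; and the strict monotonicity forces the equality in \eqref{ineq:constrbil_phi} to coincide with that in \eqref{ineq:constrbil_angle}, hence with that in \eqref{ineq:constrbil_psi}.

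I expect the real obstacle to be this monotonicity step together with the configuration bookkeeping it rests on --- that the bisector ray meets line $m_\sigma$ on the correct side for every $\phi_\sigma\in(0,\gamma)$ and that $\angle AB_\sigma E_\sigma$ increases rather than decreases, plus the clean identification of the threshold with $Q_\sigma$ (equivalently $D_\sigma$) in both the generic and the saturating regimes. The trigonometric rearrangement leading to \eqref{ineq:constrbil_psi} is routine but is exactly where the algebra of \cite{Doi20}, Section $4$, is genuinely used, so in practice the shortest honest proof is the citation skeleton of the first paragraph, with Theorem \ref{thm:crit_equiv} as the only new ingredient.
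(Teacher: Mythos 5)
The paper does not prove this theorem itself: it records inequality \eqref{ineq:constrbil_angle} as quoted from \cite{Doi20}, Section $4$, states the two equivalences as a recall of \cite{Doi20}, Theorems $4.1$ and $4.3$, and then justifies reading $\zeta_\sigma$ via Definition \ref{def:crit_geom} through the later Theorem \ref{thm:crit_equiv}. Your first paragraph (citation skeleton plus Theorem \ref{thm:crit_equiv}) is exactly the paper's approach, and your self-contained sketch is a reasonable reconstruction of the cited arguments with the correct monotonicity direction, so the proposal is correct.
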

\begin{theorem}\label{thm:crit_equiv}
The critical angles $\zeta_\Lt$ and $\zeta_\Rt$ given in Definition $\ref{def:crit_geom}$ are the same as those given in Definition $\ref{def:crit_orig}$.
\end{theorem}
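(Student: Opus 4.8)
The plan is to show that the two constructions of $\zeta_\sigma$ produce the same critical dividing point $D_\sigma$ on the minor arc $\arc{B_\Lt B_\Rt}$, since $\zeta_\sigma = \phi_\sigma(D_\sigma)/2$ in both definitions. Work with one side, say $\sigma$, and freely use the angle functions $\phi_\sigma,\phi'_\sigma,\psi_\sigma,\rho_\sigma$ and relation \eqref{eq:rho} from Definition \ref{def:angles}. There are two cases according to which clause of each definition is active, and the first step is to match these cases: in Definition \ref{def:crit_orig} the value $\zeta_\sigma=\gamma/2$ occurs exactly when $\beta_\sigma+\gamma/2+\delta_{\sigma'}\geqslant\pi$ (by Proposition \ref{prop:crit_num}), i.e.\ when the ray $n_\sigma$ meets the chain $APm_\sigma$ on the segment $AP$ beyond the point corresponding to $D=B_{\sigma'}$; in Definition \ref{def:crit_geom} the value $D_\sigma=B_{\sigma'}$ occurs exactly when $D'_\sigma$ lies between $B_{\sigma'}$ and $B'_{\sigma'}$ on the major arc. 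So I would first translate both degeneracy conditions into a single inequality on $\beta_\sigma,\gamma,\delta_{\sigma'}$ — using $\phi'_{\sigma'}(B'_{\sigma'})=2\delta_{\sigma'}$ and $\phi'_\sigma(D'_\sigma)=2\beta_\sigma$ together with $\phi'_\Lt+\phi'_\Rt=2\pi-\gamma$ on the major arc — and check they coincide; in that case both give $\zeta_\sigma=\gamma/2$ and we are done.

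For the generic case, the task is to prove that the point $Q_\sigma$ of Definition \ref{def:crit_orig} (intersection of $n_\sigma$ with $APm_\sigma$) and the point $D_\sigma$ of Definition \ref{def:crit_geom} (intersection of minor arc with segment $CD'_\sigma$) satisfy $\angle B_\sigma AQ_\sigma=\phi_\sigma(D_\sigma)/2$. The cleanest route is to compute $\phi_\sigma(D_\sigma)$ directly and show it equals twice the expression in \eqref{eq:zeta_num}. Since $D_\sigma$ is defined as the second intersection of line $CD'_\sigma$ with $c_A$, and $D'_\sigma$ is pinned down by $\phi'_\sigma(D'_\sigma)=2\beta_\sigma$, I would use the inscribed-angle / central-angle relationship on $c_A$: the chord $D'_\sigma D_\sigma$ seen from $C$ has a known inclination because $B_\sigma D'_\sigma\perp k_\sigma$, and the angle $\angle ACD_\sigma$ is exactly $\rho_\sigma(D_\sigma)$. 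Relation \eqref{eq:rho}, $\rho_\sigma=\sin\psi_\sigma/(r-\cos\psi_\sigma)$ with $r$ from \eqref{eq:r}, then converts the condition "$D_\sigma\in$ line $CD'_\sigma$" into an equation for $\psi_\sigma(D_\sigma)$, hence for $\phi_\sigma(D_\sigma)=\gamma_\sigma-\psi_\sigma(D_\sigma)$. A parallel computation for $Q_\sigma$: the ray $n_\sigma$ makes angle $\pi-\beta_\sigma+\delta_\sigma$ with $B_\sigma A$, and its intersection with segment $m_\sigma$ (the perpendicular bisector of $B_\sigma C$, so the locus of points equidistant from $B_\sigma$ and $C$) is governed by the same triangle $AB_\sigma C$ with $|AC|/|AB|=r$; expanding $\tan\angle B_\sigma AQ_\sigma$ via the law of sines in $\triangle AB_\sigma Q_\sigma$ should reproduce the arctangent argument in \eqref{eq:zeta_num}.

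Rather than carry out both of these independently, a more economical argument I would try is to feed $D_\sigma$ (from Definition \ref{def:crit_geom}) into the constructibility characterization of Theorem \ref{thm:constrbil_pos}: if I can show that $D=D_\sigma$ makes the equality hold in \eqref{ineq:constrbil_psi}, i.e.\ $\beta_\sigma+\gamma_\sigma/2+\psi_\sigma/2+\rho_\sigma=\pi/2$, then by the last sentence of that theorem the equality also holds in \eqref{ineq:constrbil_phi}, which says $\phi_\sigma(D_\sigma)=2\zeta_\sigma$ with $\zeta_\sigma$ the \emph{original} critical angle — giving the identification immediately. The geometric meaning of $\beta_\sigma+\gamma_\sigma/2+\psi_\sigma/2+\rho_\sigma=\pi/2$ is that $\angle AB_\sigma E_\sigma=\beta_\sigma-\delta_\sigma$, i.e.\ that $B_\sigma E_\sigma$ is a specific ray; so I must check that when $D=D_\sigma$ is constructed as in Definition \ref{def:crit_geom}, the point $E_\sigma$ from Construction \ref{const:pos}(2) lies on the corresponding ray. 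This reduces to showing the perpendicularity $B_\sigma D'_\sigma\perp k_\sigma$ is precisely the "boundary of constructibility" condition — a short angle-chase on the circle $c_A$ using that $D'_\sigma,D_\sigma,C$ are collinear and the inscribed angle subtended by $\arc{B_\sigma D_\sigma}$.

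The main obstacle I anticipate is the bookkeeping of orientations and the $\Lt/\Rt$ sign conventions (the functions $\phi_\sigma,\psi_\sigma,\rho_\sigma$ switch between $\angle_+$ and $\angle_-$ depending on $\sigma$), together with handling the extended range $\delta_\sigma\in(0,\pi)$ allowed by Remark \ref{rem:ext_delta}: when $\delta_\sigma>\pi/2$ one has $\gamma_\sigma<0$, and I will need the limiting form of $r$ in that regime. The actual trigonometric identity — verifying that the collinearity equation for $D_\sigma$ matches the arctangent in \eqref{eq:zeta_num} — should be routine once the right triangle relations are set up, so I would relegate it to a computation and concentrate the exposition on the clean synthetic equivalence via Theorem \ref{thm:constrbil_pos}.
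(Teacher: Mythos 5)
Your ``more economical'' route is essentially the paper's own proof: the paper also case-splits on $\beta_\sigma+\gamma/2+\delta_{\sigma'}\gtreqqless\pi$ (handled via Proposition~\ref{prop:rel_B'_D'}), and in the generic case pivots on the equivalence of \eqref{ineq:constrbil_phi} and \eqref{ineq:constrbil_psi} in Theorem~\ref{thm:constrbil_pos} together with a trigonometric computation (Proposition~\ref{prop:tan_rho_D'}) that encodes exactly your observation that $D'_\sigma,D_\sigma,C$ are collinear. The only organizational difference is direction: the paper starts from the numerically defined $D$ with $\phi_\sigma(D)=2\zeta_\sigma$, derives $\rho_\sigma(D)=\rho_\sigma(D'_\sigma)$ from the critical equality, and identifies $D=D_\sigma$ by monotonicity of $\rho_\sigma$, whereas you propose to verify the critical equality directly for $D_\sigma$ and then read off $\phi_\sigma(D_\sigma)=2\zeta_\sigma$ from Theorem~\ref{thm:constrbil_pos}; the two directions are equivalent and rest on the same calculation, which you correctly flag as the nontrivial step rather than a mere angle-chase.
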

To prove the theorem, we will use the following two results.
\begin{proposition}\label{prop:tan_rho_D'}
Let $D'_\sigma$ be a point in major arc $\arc{B_\Rt B_\Lt}$ with center $A$ given in Definition $\ref{def:crit_geom}$, $(1)$,
so that $\phi'_\sigma (D'_\sigma )=2\beta_\sigma$.
Suppose that there exists a point $D$ in minor arc $\arc{B_\Lt B_\Rt}$ with center $A$ satisfying
\begin{equation}\label{eq:crit_psi}
\beta_\sigma +\frac{\gamma_\sigma}{2}+\frac{\psi_\sigma (D)}{2}+\rho_\sigma (D)=\frac{\pi}{2}.
\end{equation}
Then we have $\rho_\sigma (D)=\rho_\sigma (D'_\sigma )$.
\end{proposition}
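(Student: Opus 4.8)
The plan is to transport the whole statement into the angle functions of Definition~\ref{def:angles} and reduce it to a one-line trigonometric identity. The first step is to locate $D'_\sigma$ in the $\psi_\sigma$-coordinate. On the minor arc one has $\phi_\sigma+\psi_\sigma=\gamma_\sigma$, so $\psi_\sigma$ varies at unit rate along $c_A$ and decreases as one leaves $B_\sigma$ into $\arc{B_\Lt B_\Rt}$; hence it increases at unit rate as one leaves $B_\sigma$ into the major arc, i.e. $\psi_\sigma=\gamma_\sigma+\phi'_\sigma$ on $\arc{B_\Rt B_\Lt}$. Since $\phi'_\sigma(D'_\sigma)=2\beta_\sigma<2\pi-\gamma$ by condition (i) of Construction~\ref{const:cond}, the point $D'_\sigma$ indeed lies in the interior of the major arc, and $\psi_\sigma(D'_\sigma)=\gamma_\sigma+2\beta_\sigma$.

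Next, writing $\psi=\psi_\sigma(D)$, I would observe that by \eqref{eq:rho} the value of $\rho_\sigma$ at a point of $c_A$ is a function of its $\psi_\sigma$-coordinate alone, namely $\tan\rho_\sigma=\sin\psi_\sigma/(r-\cos\psi_\sigma)$, and that $\rho_\sigma$ never leaves $(-\pi/2,\pi/2)$ on $c_A$ because $C$ lies outside $c_A$ (one has $\norm{AC}=r\,\norm{AB}>\norm{AB}$ by \eqref{eq:r}), so every ray from $C$ to $c_A$ meets $CA$ at an angle of absolute value less than $\pi/2$. Consequently the conclusion $\rho_\sigma(D)=\rho_\sigma(D'_\sigma)$ is equivalent to
\begin{equation*}
\frac{\sin\psi}{r-\cos\psi}=\frac{\sin(\gamma_\sigma+2\beta_\sigma)}{r-\cos(\gamma_\sigma+2\beta_\sigma)},
\end{equation*}
where the two arguments are distinct since $\psi\le\gamma_\sigma<\gamma_\sigma+2\beta_\sigma$ (as $D\in\arc{B_\Lt B_\Rt}$ and $\beta_\sigma>0$).

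For the core I would use the elementary fact that, for $\theta_1\ne\theta_2$,
\begin{equation*}
\frac{\sin\theta_1}{r-\cos\theta_1}=\frac{\sin\theta_2}{r-\cos\theta_2}\quad\Longleftrightarrow\quad r\cos\tfrac{\theta_1+\theta_2}{2}=\cos\tfrac{\theta_1-\theta_2}{2},
\end{equation*}
which follows from the product-to-sum formulas. Taking $\theta_1=\psi$ and $\theta_2=\gamma_\sigma+2\beta_\sigma$ and setting $s=\tfrac{\theta_1+\theta_2}{2}=\beta_\sigma+\gamma_\sigma/2+\psi/2$ (so that $\tfrac{\theta_1-\theta_2}{2}=\psi-s$), this reduces the claim to $r\cos s=\cos(s-\psi)$. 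But hypothesis \eqref{eq:crit_psi} says exactly $\rho_\sigma(D)=\pi/2-s$, hence $\tan\rho_\sigma(D)=\cot s$; combining this with \eqref{eq:rho} and clearing denominators (then multiplying through by $\sin s$) gives $r\cos s=\cos s\cos\psi+\sin s\sin\psi=\cos(s-\psi)$, which is precisely what is required.

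I expect the genuine work to lie entirely in the bookkeeping: fixing orientations so that the major-arc identity $\psi_\sigma=\gamma_\sigma+\phi'_\sigma$ (hence $\psi_\sigma(D'_\sigma)=\gamma_\sigma+2\beta_\sigma$) comes out with the correct sign, and justifying the final passage from equality of tangents back to equality of the angles $\rho_\sigma$, for which one needs $C$ to lie outside $c_A$. Once these are pinned down, the trigonometric identity is only a couple of lines.
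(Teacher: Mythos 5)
Your proof is correct and takes a genuinely different route from the paper's, although both ultimately rest on the same relation $\eqref{eq:rho}$. The paper substitutes $t=\tan(\psi_\sigma/2)$ and $\tau=\tan(\pi/2-\beta_\sigma-\gamma_\sigma/2)$, invokes the relation $\tan(\psi_\sigma/2+\rho_\sigma)=\tfrac{r+1}{r-1}\tan(\psi_\sigma/2)$ from \cite{Doi20}, Theorem~$4.3$ to obtain $\tau=\tfrac{r+1}{r-1}t$, and then algebraically rewrites $\tan\rho_\sigma(D)=\tfrac{2t}{(1+t^2)r-(1-t^2)}$ as $\tfrac{\sin(2\beta_\sigma+\gamma_\sigma)}{r-\cos(2\beta_\sigma+\gamma_\sigma)}$, implicitly using both that $\psi_\sigma(D'_\sigma)=2\beta_\sigma+\gamma_\sigma$ and that $\tan$ is injective on the relevant range. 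You bypass the half-angle substitution and the appeal to \cite{Doi20} entirely: after explicitly establishing $\psi_\sigma(D'_\sigma)=\gamma_\sigma+2\beta_\sigma$ by extending $\phi_\sigma+\psi_\sigma=\gamma_\sigma$ to the major arc, a sum-to-product manipulation reduces $\tan\rho_\sigma(D)=\tan\rho_\sigma(D'_\sigma)$ to the identity $r\cos s=\cos(s-\psi)$ with $s=\beta_\sigma+\gamma_\sigma/2+\psi/2$, which falls out in one line from the hypothesis $\rho_\sigma(D)=\pi/2-s$ combined with $\eqref{eq:rho}$. You also make explicit the passage from equality of tangents to equality of $\rho_\sigma$, which the paper leaves tacit. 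One minor nit: the claim that $r>1$ (so $C$ lies outside $c_A$ and $\rho_\sigma\in(-\pi/2,\pi/2)$) is not literally immediate from the displayed form of $\eqref{eq:r}$; a cleaner justification is geometric, e.g.\ $\angle AB_\sigma C=\pi/2+\delta_\sigma\geqslant\pi/2$ makes $AC$ the longest side of $\triangle AB_\sigma C$ — but this assumption is implicitly needed in the paper's proof too (for the factor $(r+1)/(r-1)$ and the final $\tan$-cancellation), so spelling it out is an improvement rather than a flaw. Overall your version is more self-contained and makes more visible exactly where the hypothesis $\eqref{eq:crit_psi}$ enters.
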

\begin{proof}
Let $\psi_\sigma =\psi_\sigma (D)$ and $\rho_\sigma =\rho_\sigma (D)$.
Define $t$ and $\tau$ by
\begin{equation}\label{eq:t_tau}
t=\tan\frac{\psi_\sigma}{2}\quad\text{and}\quad\tau =\tan\left(\frac{\pi}{2}-\beta_\sigma -\frac{\gamma_\sigma}{2}\right)
=\frac{\cos (\beta_\sigma +\gamma_\sigma /2)}{\sin (\beta_\sigma +\gamma_\sigma /2)}.
\end{equation}
Recall from \cite{Doi20}, Theorem $4.3$ that
\begin{equation}\label{eq:rel_t_tau}
\tan\left(\frac{\psi_\sigma}{2}+\rho_\sigma\right) =\frac{r+1}{r-1}\cdot\tan\frac{\psi_\sigma}{2},\quad\text{so that}\quad\tau =\frac{r+1}{r-1}\cdot t.
\end{equation}
Thus if $D$ satisfies $\eqref{eq:crit_psi}$, then we calculate $\tan\rho_\sigma (D)$
using $\eqref{eq:rho}$, $\eqref{eq:t_tau}$ and $\eqref{eq:rel_t_tau}$ as
\begin{align*}
\tan\rho_\sigma (D)&=\frac{\sin\psi_\sigma}{r-\cos\psi_\sigma}=\frac{2t/(1+t^2)}{r-(1-t^2)/(1+t^2)}\\
&=\frac{2t}{(1+t^2)r-(1-t^2)}=\frac{2(r-1)/(r+1)}{(r-1)+(r-1)^2/(r+1)}\\
&=\frac{2\tau}{r+1+(r-1)\tau^2}=\frac{2\tau}{r(1+\tau^2)-(\tau^2-1)}\\
&=\frac{2\tau/(1+\tau^2)}{r-(\tau^2-1)/(1+\tau^2)}=\frac{\sin (2\beta_\sigma +\gamma_\sigma )}{r-\cos (2\beta_\sigma +\gamma_\sigma )}
=\tan\rho_\sigma (D'_\sigma ),
\end{align*}
which gives $\rho_\sigma (D)=\rho_\sigma (D'_\sigma )$ as desired.
This completes the proof of Proposition $\ref{prop:tan_rho_D'}$.
\end{proof}
\begin{proposition}\label{prop:rel_B'_D'}
Let $D'_\sigma$ and $B'_\sigma$ be points in major arc $\arc{B_\Rt B_\Lt}$ with center $A$
given in procedures $(1)$ and $(2)$ of Definition $\ref{def:crit_geom}$ respectively.
Then we have 
\begin{equation}\label{eq:B'_sigma}
\phi'_\sigma (B'_\sigma )=2\delta_\sigma <2\beta_\sigma =\phi'_\sigma (D'_\sigma ).
\end{equation}
Also, either one of the following cases holds for $D'_\sigma$:
\renewcommand{\labelenumi}{$(\mathrm{\roman{enumi}})$}
\begin{enumerate}
\item $\beta_\sigma +\gamma /2+\delta_{\sigma'}=\pi$,
which holds if and only if $\phi'_{\sigma'}(D'_\sigma )=\phi'_{\sigma'}(B'_{\sigma'})=2\delta_{\sigma'}$, or equivalently, $D'_\sigma=B'_{\sigma'}$;
\item $\beta_\sigma +\gamma /2+\delta_{\sigma'}<\pi$, which  holds if and only if 
\begin{equation*}
\phi'_\sigma (B'_\sigma )<\phi'_\sigma (D'_\sigma )<\phi'_\sigma (B'_{\sigma'})
\end{equation*}
or equivalently, $D'_\sigma$ lies strictly between $B'_\Lt$ and $B'_\Rt$ in major arc $\arc{B_\Rt B_\Lt}$; or
\item $\beta_\sigma +\gamma /2+\delta_{\sigma'}>\pi$, which  holds if and only if
\begin{equation*}
0=\phi'_{\sigma'}(B_{\sigma'})<\phi'_{\sigma'}(D'_\sigma )<\phi'_{\sigma'}(B'_{\sigma'}),
\end{equation*}
or equivalently, $D'_\sigma$ lies strictly between $B_{\sigma'}$ and $B'_{\sigma'}$ in major arc $\arc{B_\Rt B_\Lt}$.
\end{enumerate}
\end{proposition}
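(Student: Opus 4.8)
The plan is to reduce everything to the two angle identities
\[
\phi'_\sigma(D'_\sigma)=2\beta_\sigma\qquad\text{and}\qquad\phi'_\sigma(B'_\sigma)=2\delta_\sigma ,
\]
after which the stated trichotomy becomes pure bookkeeping with the relation $\phi'_\Lt+\phi'_\Rt=2\pi-\gamma$ on the major arc from Definition~\ref{def:angles}.

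First I would prove the identities. Since $B'_\sigma$ is the second intersection with $c_A$ of the line through $B_\sigma$ and $C$, and since segment $B_\sigma C$ is perpendicular to $\ell_\sigma$ while $\ell_\sigma$ is the radial direction $\ora{AB_\sigma}$ rotated through $\delta_\sigma$ (Construction~\ref{const:cond}, $(1)$), the line $B_\sigma C$ makes angle $\delta_\sigma$ with the tangent to $c_A$ at $B_\sigma$; by the tangent--chord angle theorem the chord $B_\sigma B'_\sigma$ then subtends a central angle $2\delta_\sigma$, and checking that it opens onto the side of the major arc on which $\phi'_\sigma$ increases gives $\phi'_\sigma(B'_\sigma)=2\delta_\sigma$. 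The same argument applied to the perpendicular to $k_\sigma$ through $B_\sigma$, which makes angle $\beta_\sigma$ with that tangent, gives $\phi'_\sigma(D'_\sigma)=2\beta_\sigma$; this is the ``equivalently'' clause of Definition~\ref{def:crit_geom}, $(1)$. The elementary conditions (i), (iii.a), (iii.b) of Construction~\ref{const:cond} guarantee $0\leqslant 2\delta_\sigma<2\beta_\sigma<2\pi-\gamma$, so both values lie in the range $[0,2\pi-\gamma]$ of $\phi'_\sigma$; the middle inequality of $\eqref{eq:B'_sigma}$ is precisely $\delta_\sigma<\beta_\sigma$, condition (iii.b).

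Next I would carry out the trichotomy. Applying $\phi'_\sigma(D)+\phi'_{\sigma'}(D)=2\pi-\gamma$ to $D=D'_\sigma$ and to $D=B'_{\sigma'}$ gives $\phi'_{\sigma'}(D'_\sigma)=2\pi-\gamma-2\beta_\sigma$ and $\phi'_{\sigma'}(B'_{\sigma'})=2\delta_{\sigma'}$, hence
\[
\phi'_{\sigma'}(D'_\sigma)-\phi'_{\sigma'}(B'_{\sigma'})=2\bigl(\pi-\beta_\sigma-\gamma/2-\delta_{\sigma'}\bigr).
\]
Thus, according as $\beta_\sigma+\gamma/2+\delta_{\sigma'}$ equals, is less than, or is greater than $\pi$, we get $\phi'_{\sigma'}(D'_\sigma)$ equal to, greater than, or less than $\phi'_{\sigma'}(B'_{\sigma'})$; these three conditions are mutually exclusive and exhaustive. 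The first yields case~(i), i.e., $\phi'_{\sigma'}(D'_\sigma)=\phi'_{\sigma'}(B'_{\sigma'})=2\delta_{\sigma'}$, equivalently $D'_\sigma=B'_{\sigma'}$ (a point of the major arc being determined by its value of $\phi'_{\sigma'}$). For the remaining two I would add two one-line observations: (a) $\phi'_\sigma(B'_\sigma)=2\delta_\sigma<2\beta_\sigma=\phi'_\sigma(D'_\sigma)$, so $D'_\sigma$ always lies strictly past $B'_\sigma$ in the direction of increasing $\phi'_\sigma$; and (b) $\phi'_{\sigma'}(D'_\sigma)=2\pi-\gamma-2\beta_\sigma>0$ because $\beta_\sigma+\gamma/2<\pi$ by condition (i), so $D'_\sigma$ never reaches $B_{\sigma'}$. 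Combining (a) and (b) with the displayed sign gives, in case~(ii), $\phi'_\sigma(B'_\sigma)<\phi'_\sigma(D'_\sigma)<\phi'_\sigma(B'_{\sigma'})$, i.e., $D'_\sigma$ strictly between $B'_\Lt$ and $B'_\Rt$; and in case~(iii), $0=\phi'_{\sigma'}(B_{\sigma'})<\phi'_{\sigma'}(D'_\sigma)<\phi'_{\sigma'}(B'_{\sigma'})$, i.e., $D'_\sigma$ strictly between $B_{\sigma'}$ and $B'_{\sigma'}$. Since the three cases exhaust all possibilities, the reverse implications in (i)--(iii) follow as well.

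The main obstacle is not the algebra --- that is the two-line computation above --- but the orientation check inside $\phi'_\sigma(B'_\sigma)=2\delta_\sigma$: one must verify that the chord $B_\sigma B'_\sigma$ opens onto the side of the major arc where $\phi'_\sigma$ \emph{increases}, and handle uniformly the regime $\delta_\sigma\geqslant\pi/2$ (admitted in this paper by Remark~\ref{rem:ext_delta}), in which the naive ``plain angle'' form of the tangent--chord computation must be replaced by the oriented one governed by the $\angle_\pm$ conventions of Convention~\ref{conv}.
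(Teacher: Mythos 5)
Your proof is correct and follows essentially the same route as the paper: establish the two angle identities $\phi'_\sigma(B'_\sigma)=2\delta_\sigma$ and $\phi'_\sigma(D'_\sigma)=2\beta_\sigma$, then read off the trichotomy from $\phi'_\Lt+\phi'_\Rt=2\pi-\gamma$. The only real difference is that you prove $\phi'_\sigma(B'_\sigma)=2\delta_\sigma$ by the tangent--chord angle while the paper uses the isosceles triangle $AB_\sigma B'_\sigma$ with an explicit case split at $\delta_\sigma=\pi/2$ --- as you correctly anticipate, your tangent--chord route still needs the same orientation check in the $\delta_\sigma\geqslant\pi/2$ regime, so the two are essentially equivalent; a small plus in your write-up is making explicit the bound $\phi'_{\sigma'}(D'_\sigma)>0$ from condition (i), which the paper leaves implicit in case (iii).
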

\begin{proof}
Regarding $\eqref{eq:B'_sigma}$, the second equality is obvious from the construction of $D'_\sigma$ in $(1)$ of Definition $\ref{def:crit_geom}$.
Thus it suffices to prove the first equality because then the middle inequality follows from condition (iii.b) of Construction $\ref{const:cond}$.
First suppose $\delta_\sigma\leqslant\pi /2$, so that $\gamma_\sigma\geqslant 0$.
Then we have
\begin{equation*}
\angle B_\sigma B'_\sigma A=\angle B'_\sigma B_\sigma A=\pi /2-\delta_\sigma ,
\end{equation*}
so that
\begin{equation*}
\phi'_\sigma (B'_\sigma )=\pi -\angle B_\sigma B'_\sigma A-\angle B'_\sigma B_\sigma A=2\delta_\sigma .
\end{equation*}
Next suppose $\delta_\sigma <\pi /2$, so that $\gamma_\sigma <0$.
Then we have
\begin{equation*}
\angle AB'_\sigma B_\sigma =\angle AB_\sigma B'_\sigma =\pi -\angle AB_\sigma C=\pi -(2\pi -(\pi/2 +\delta_\sigma ))=\delta_\sigma -\pi /2,
\end{equation*}
so that
\begin{equation*}
\phi'_\sigma (B'_\sigma )=2\pi -\angle B_\sigma AB'_\sigma =2\pi -(\pi -2(\delta_\sigma -\pi /2))=2\delta_\sigma .
\end{equation*}
Thus we proved $\eqref{eq:B'_sigma}$.
Also, noting that $\phi'_\sigma (D'_\sigma )=2\beta_\sigma$, we see that $\phi'_\sigma (B'_{\sigma'})\lesseqqgtr\phi'_\sigma (D'_\sigma )$ is equivalent to
\begin{align*}
2\delta_{\sigma'}&=\phi'_{\sigma'}(B'_{\sigma'})\gtreqqless\phi'_{\sigma'}(D'_\sigma )\\
&=(2\pi -\gamma )-\phi'_\sigma (D'_\sigma )=2\pi -\gamma -2\beta_\sigma ,
\end{align*}
which gives $\beta_\sigma +\gamma /2+\delta_{\sigma'}\gtreqqless\pi$.
Combining $\eqref{eq:B'_sigma}$ and $\phi'_\sigma +\phi'_{\sigma'}=2\pi -\gamma$, we obtain the equivalent conditions for $D'_\sigma$ in cases (i)--(iii).
This completes the proof.
\end{proof}
\begin{proof}[Proof of Theorem $\ref{thm:crit_equiv}$]
Here let $\zeta_\sigma$ be as in Definition $\ref{def:crit_orig}$,
so that $\zeta_\sigma$ is given numerically by $\eqref{eq:zeta_num}$ of Proposition $\ref{prop:crit_num}$.
Also, let $D_\sigma$ be as in Definition $\ref{def:crit_geom}$.
Then we shall prove that $\phi_\sigma (D_\sigma )=2\zeta_\sigma$.

First suppose $\beta_\sigma +\gamma /2+\delta_{\sigma'}\geqslant\pi$.
Then Proposition $\ref{prop:crit_num}$ gives $\zeta_\sigma =\gamma /2$,
while Proposition $\ref{prop:rel_B'_D'}$ gives $D_\sigma =B_{\sigma'}$, so that $\phi_\sigma (D_\sigma )=\gamma$.
Thus we have $\phi_\sigma (D_\sigma )=2\zeta_\sigma$ as desired.

Next suppose $\beta_\sigma +\gamma /2+\delta_{\sigma'}<\pi$.
Then we see from Proposition $\ref{prop:rel_B'_D'}$ that $D'_\sigma$ lies strictly between $B'_\Lt$ and $B'_\Rt$ in major arc $\arc{B_\Rt B_\Lt}$,
so that $\phi_\sigma (D_\sigma )\in (0,\gamma )$.
Meanwhile, Definition $\ref{def:crit_orig}$ or Proposition $\ref{prop:crit_num}$ gives that $\zeta_\sigma\in (0,\gamma /2)$.
Now let $D$ be a point in minor arc $\arc{B_\Lt B_\Rt}$ such that $\phi_\sigma (D)=2\zeta_\sigma\in (0,\gamma )$.
Then it follows from Theorem $\ref{thm:constrbil_pos}$ that $D$ satisfies
\begin{equation*}
\beta_\sigma +\frac{\gamma_\sigma}{2}+\frac{\psi_\sigma (D)}{2}+\rho_\sigma (D)=\frac{\pi}{2},
\end{equation*}
so that we have $\rho_\sigma (D)=\rho_\sigma (D'_\sigma )$ by Proposition $\ref{prop:tan_rho_D'}$.
Since $D_\sigma$ also satisfies $\rho_\sigma (D_\sigma )=\rho_\sigma (D'_\sigma )$,
we must have $D_\sigma =D$ from the monotonicity of $\rho_\sigma$ on minor arc $\arc{B_\Lt B_\Rt}$.
Hence we have $\phi (D_\sigma )=2\zeta_\sigma$ as desired.

This completes the proof of Theorem $\ref{thm:crit_equiv}$.
\end{proof}

Now we recall the following existence theorem of positive origons.
\begin{theorem}[\cite{Doi20}, Theorem $4.6$]\label{thm:zeta_L+R}
The critical angles $\zeta_\Lt$ and $\zeta_\Rt$ given in either Definition $\ref{def:crit_orig}$, Proposition $\ref{prop:crit_num}$
or Definition $\ref{def:crit_geom}$ always satisfy $\zeta_\Lt +\zeta_\Rt >\gamma /2$.
Equivalently, the critical dividing points $D_\Lt$ and $D_\Rt$ given in Definition $\ref{def:crit_orig}$ or $\ref{def:crit_geom}$
always satisfy $0\leqslant\phi_\Lt (D_\Rt )<\phi_\Lt (D_\Lt )\leqslant\gamma$
(and also $0\leqslant\phi_\Rt (D_\Lt )<\phi_\Rt (D_\Rt )\leqslant\gamma$).
Thus intervals $[\phi_\Lt (D_\Rt ),\phi_\Lt (D_\Lt )]\cap (0,\gamma )=[\gamma -2\zeta_\Rt ,2\zeta_\Lt ]\cap (0,\gamma )$ and 
$[\phi_\Lt (D_\Rt ),\phi_\Lt (D_\Lt )]\cap (0,\gamma )=[\gamma -2\zeta_\Lt ,2\zeta_\Rt ]\cap (0,\gamma )$
given in procedure $(1)$ of Construction $\ref{const:pos}$ include infinitely many points.
\end{theorem}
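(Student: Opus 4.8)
The plan is to reduce the inequality $\zeta_\Lt+\zeta_\Rt>\gamma/2$ to a statement that can be checked using the geometric redefinition in Definition~$\ref{def:crit_geom}$, exploiting the monotonicity already built into the angle functions $\phi_\sigma,\phi'_\sigma,\rho_\sigma$. Since $\zeta_\sigma=\phi_\sigma(D_\sigma)/2$, the inequality $\zeta_\Lt+\zeta_\Rt>\gamma/2$ is exactly $\phi_\Lt(D_\Lt)+\phi_\Rt(D_\Rt)>\gamma$. Using $\phi_\Lt+\phi_\Rt=\gamma$ on the minor arc, this rewrites as $\phi_\Lt(D_\Rt)<\phi_\Lt(D_\Lt)$, i.e. that $D_\Rt$ precedes $D_\Lt$ on the minor arc $\arc{B_\Lt B_\Rt}$. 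So the whole theorem collapses to a single ordering claim about the two critical dividing points, and the inequalities with $\gamma/2$ replaced by a nonstrict one at the endpoints follow from $\zeta_\sigma\in(0,\gamma/2]$.

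First I would dispose of the boundary cases. If $\beta_\sigma+\gamma/2+\delta_{\sigma'}\geqslant\pi$ for some $\sigma$, then by Proposition~$\ref{prop:rel_B'_D'}$ and the argument in the proof of Theorem~$\ref{thm:crit_equiv}$ we get $\zeta_\sigma=\gamma/2$ and $D_\sigma=B_{\sigma'}$; combined with $\zeta_{\sigma'}>0$ this already gives $\zeta_\Lt+\zeta_\Rt>\gamma/2$ (and the ordering $\phi_\Lt(D_\Rt)<\phi_\Lt(D_\Lt)$ holds since one endpoint is an extreme point of the arc and the other is interior). So I may assume $\beta_\Lt+\gamma/2+\delta_\Rt<\pi$ and $\beta_\Rt+\gamma/2+\delta_\Lt<\pi$, in which case both $D'_\Lt$ and $D'_\Rt$ lie strictly between $B'_\Lt$ and $B'_\Rt$ in the major arc, so both $D_\Lt,D_\Rt$ are genuine interior points of the minor arc obtained as $C D'_\sigma\cap\arc{B_\Lt B_\Rt}$.

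Next comes the heart of the matter: showing $D_\Rt$ comes before $D_\Lt$. The clean way is to use the projection from $C$: mapping a point of the major arc through $C$ to the minor arc. Under this central projection, $D_\sigma$ is the image of $D'_\sigma$, and the cyclic order on the major arc is reversed to a cyclic order on the minor arc (because $C$ lies outside the relevant region, on the opposite side from the minor arc). Since $\phi'_\sigma(D'_\sigma)=2\beta_\sigma$ pins down $D'_\Lt$ and $D'_\Rt$ on the major arc, I would translate ``$D_\Rt$ precedes $D_\Lt$ on the minor arc'' into ``$D'_\Lt$ precedes $D'_\Rt$ on the major arc going from $B_\Rt$ to $B_\Lt$'', i.e. into $\phi'_\Lt(D'_\Lt)+\phi'_\Rt(D'_\Rt)>2\pi-\gamma$ once the orientation-reversal of the projection is accounted for; but $\phi'_\Lt(D'_\Lt)+\phi'_\Rt(D'_\Rt)=2\beta_\Lt+2\beta_\Rt$, so this is exactly $\beta_\Lt+\beta_\Rt>\pi-\gamma/2$, which is precisely condition~(i) of Construction~$\ref{const:cond}$ (the inequality $\beta_\Lt+\beta_\Rt+\gamma/2>\pi$). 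Alternatively, avoiding the synthetic projection argument, I would run the same computation numerically from Proposition~$\ref{prop:crit_num}$: in the regime $\beta_\sigma+\gamma/2+\delta_{\sigma'}<\pi$ for both $\sigma$ one has closed-form expressions for $\tan\zeta_\Lt$ and $\tan\zeta_\Rt$, and $\zeta_\Lt+\zeta_\Rt-\gamma/2>0$ follows by expanding $\tan(\zeta_\Lt+\zeta_\Rt)$ and $\tan(\gamma/2)$ and checking the sign of the numerator of their difference, which after simplification using $c,c',b_\sigma,d_\sigma$ reduces again to condition~(i).

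The main obstacle will be making the ``central projection reverses cyclic order'' step rigorous and unambiguous: one has to verify that $C$ lies on the far side of chord $B_\Lt B_\Rt$ from the minor arc (so that segments $CD'_\sigma$ actually cross the minor arc, as already noted in Definition~$\ref{def:crit_geom}$~(3) and Proposition~$\ref{prop:rel_B'_D'}$), and that the map $D'\mapsto CD'\cap\arc{B_\Lt B_\Rt}$ is monotone decreasing in the arc-length parameter on the sub-arc of the major arc where it is defined. Once that monotonicity is in hand, the whole theorem is just the translation of condition~(i), so I would present the projection lemma carefully and then let the rest follow in a line or two. I expect the numerical route to be the safer fallback if the synthetic monotonicity statement turns out to need more care about the excenter configuration than is worth spelling out.
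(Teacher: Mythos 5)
Your argument matches the paper's proof: both reduce the theorem to the single ordering claim $\phi_\Lt(D_\Rt)<\phi_\Lt(D_\Lt)$, dispose of the cases $\beta_\sigma+\gamma/2+\delta_{\sigma'}\geqslant\pi$ via Proposition~\ref{prop:rel_B'_D'}, and in the remaining case obtain $\phi'_\Lt(D'_\Lt)-\phi'_\Lt(D'_\Rt)=2(\beta_\Lt+\beta_\Rt+\gamma/2-\pi)>0$ from condition (i) and transfer it to the minor arc by the monotone central projection from $C$. One small geometric slip to fix when writing this up: $C$ lies on the \emph{same} side of chord $B_\Lt B_\Rt$ as the minor arc, just farther out (which is exactly why segments $CD'_\sigma$ cross the minor arc), and the projection's monotonicity you rightly flag as the step needing care is what the paper leaves implicit in the phrase ``which leads to.''
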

Here we shall give a geometric proof of this theorem using Definition $\ref{def:crit_geom}$, which simplifies the two proofs given in \cite{Doi20}.
\begin{proof}
It suffices to prove that $\phi_\Lt (D_\Rt )<\phi_\Lt (D_\Lt )$.
Since we see from $(1)$ of Definition $\ref{def:crit_geom}$ that $\phi'_\Lt (D'_\Lt )=2\beta_\Lt$ and $\phi'_\Rt (D'_\Rt )=2\beta_\Rt$, we have
\begin{equation}\label{ineq:D'_R<D'_L}
\phi'_\Lt (D'_\Lt )-\phi'_\Lt (D'_\Rt )=2\beta_\Lt -(2\pi -\gamma -2\beta_\Rt )=2(\beta_\Lt +\beta_\Rt +\gamma /2 -\pi )>0
\end{equation}
by condition (i) of Construction $\ref{const:cond}$.

Now suppose $\beta_\Lt +\gamma /2+\delta_\Rt\geqslant\pi$.
Then by Proposition $\ref{prop:rel_B'_D'}$, we have $\phi_\Lt (D_\Lt )=\gamma$ and $\phi_\Lt (D_\Rt )=\gamma -\phi_\Rt (D_\Rt )\in [0,\gamma )$,
so that $\phi_\Lt (D_\Rt )<\phi_\Lt (D_\Lt )$.
Next suppose $\beta_\Rt +\gamma /2+\delta_\Lt\geqslant\pi$.
Then we also have $0=\phi_\Lt (D_\Rt )<\phi_\Lt (D_\Lt )$ in the same way.
Finally, suppose $\beta_\sigma +\gamma /2+\delta_{\sigma'}<\pi$ for both $\sigma =\Lt ,\Rt$.
Then we have $\phi'_\Lt (B'_\Lt )<\phi'_\Lt (D'_\Rt )<\phi'_\Lt (D'_\Lt )<\phi'_\Lt (B'_\Rt )$ by $\eqref{ineq:D'_R<D'_L}$ and
Proposition $\ref{prop:rel_B'_D'}$, (ii), which leads to $0<\phi_\Lt (D_\Rt )<\phi_\Lt (D_\Lt )<\gamma$.
This completes the proof of Theorem $\ref{thm:zeta_L+R}$.
\end{proof}
\section{Geometric construction of canonical negative origon gadgets}\label{sec:const_neg_can_geom}
We begin with a geometric (ruler and compass) construction of negative origon gadgets.
\begin{construction}\label{const:neg_can}\rm
Consider a development as in Figure $\ref{fig:dev_1}$, for which we require condition (i), (ii) and (iii.a)--(iii.c) of Construction $\ref{const:cond}$,
where the development is seen from the back side according to Convention $\ref{conv}$, $(3)$.

Then the crease pattern of a canonical negative origon gadget with prescribed simple outgoing pleats $(\ell_\Lt ,m_\Lt )$ and $(\ell_\Rt ,m_\Rt )$
is constructed as follows, where we regard $\sigma$ as taking both $\Lt$ and $\Rt$.
\renewcommand{\labelenumi}{(\arabic{enumi})}
\begin{enumerate}
\item Determine the \emph{canonical dividing point} $D_\can$ in minor arc $\arc{B_\Lt B_\Rt}$ with center $A$
either geometrically here or numerically in $(1')$.
Let $B'$ be the intersection point of a perpendicular through $B_\Lt$ to $k_\Lt$ and a perpendicular through $B_\Rt$ to $k_\Rt$.
Then $D_\can$ is determined as the intersection point of segment $B'C$ and minor arc $\arc{B_\Lt B_\Rt}$.
\item[(1${}'$)] Determine the canonical dividing point $D_\can$ in minor arc $\arc{B_\Lt B_\Rt}$ with center $A$
such that $\rho_\Lt =\angle B_\Lt CA -\angle B_\Lt CD_\can$ and $\psi_\Lt =\angle B_\Lt AC -\angle B_\Lt AD_\can$ are calculated
using either Theorem $\ref{thm:exist_rho}$ or equations $\eqref{eq:angle_BCD}$, $\eqref{eq:rho_alt_2}$ in Proposition $\ref{prop:angle_BCD}$,
or alternatively, $\angle B_\Lt CD_\can$ is calculated using $\eqref{eq:angle_BCD}$ in Proposition $\ref{prop:angle_BCD}$.
\setcounter{enumi}{1}
\item Let $E_\sigma$ be the intersection point of $m_\sigma$ and the bisector of $\angle B_\sigma AD_\can$,
and redefine $m_\sigma$ to be a ray starting from $E_\sigma$ and going in the same direction as $\ell_\sigma$.
\item Draw a parallel to segment $E_\Lt E_\Rt$ through $D_\can$, letting $G'_\sigma$ be the intersection point of the parallel and segment $AE_\sigma$.
\item Draw a parallel to segment $E_\Lt E_\Rt$ through $C$, letting $P_\sigma$ be the intersection point of the parallel and ray $m_\sigma$.
\item The desired crease pattern is shown as the solid lines in Figure $\ref{fig:neg_can_CP}$,
and the assignment of mountain and valley folds is given in Table $\ref{tbl:neg_can_assign}$.
\end{enumerate}
\end{construction}
\addtocounter{theorem}{1}
\begin{figure}[htbp]
\centering\includegraphics[width=0.75\hsize]{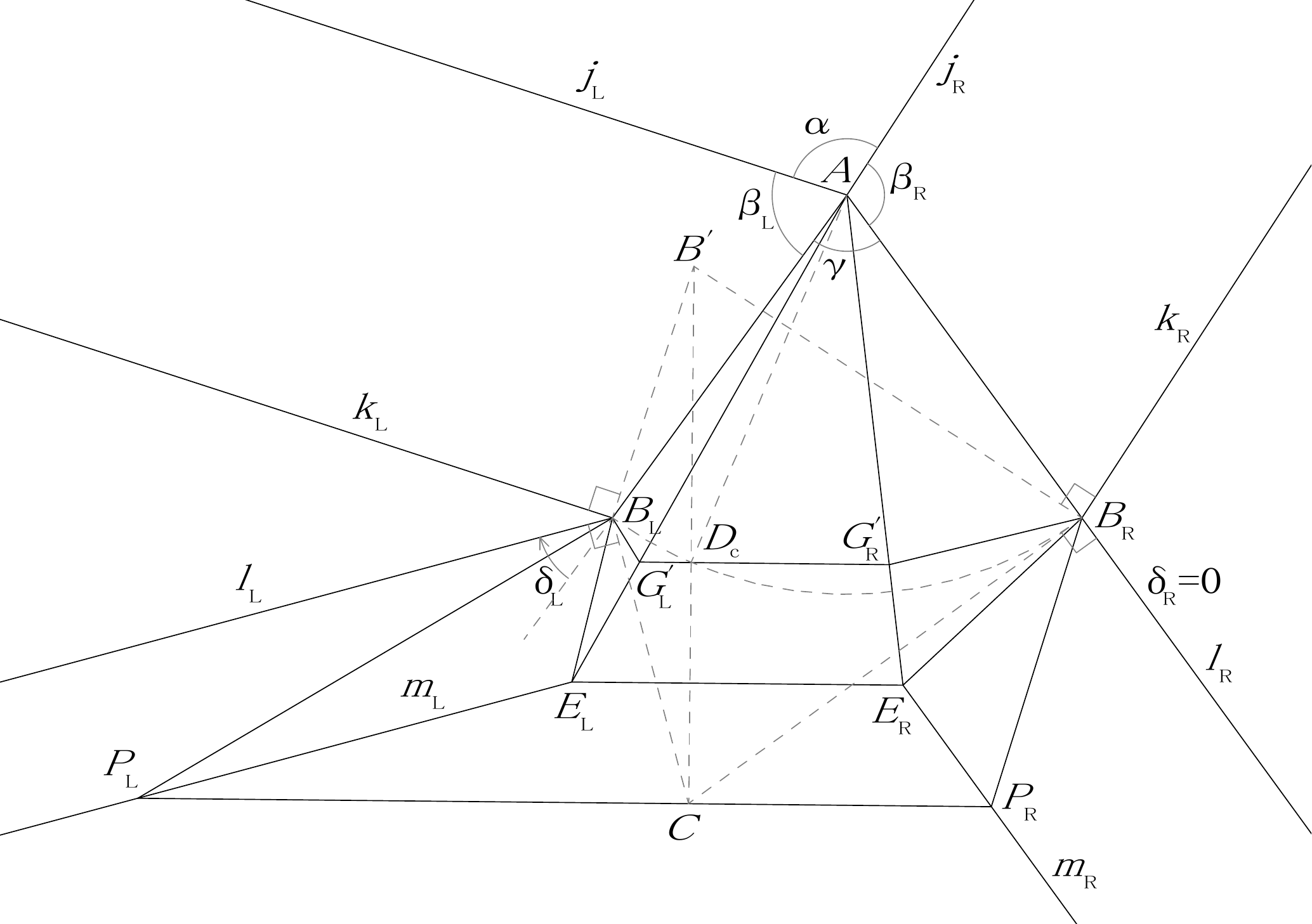}
\caption{Crease pattern of a canonical negative origon gadget}
\label{fig:neg_can_CP}
\end{figure}
\addtocounter{theorem}{1}
\begin{table}[h]
\begin{tabular}{c|c}
mountain folds&$j_\sigma ,m_\sigma ,AE_\sigma ,B_\sigma G'_\sigma ,B_\sigma P_\sigma, E_\Lt E_\Rt$\\ \hline
valley folds&$k_\sigma ,\ell_\sigma ,AB_\sigma ,B_\sigma E_\sigma ,G'_\Lt G'_\Rt ,P_\Lt P_\Rt$
\end{tabular}\vspace{0.5cm}
\caption{Assignment of mountain and valley folds to a canonical negative origon gadget}
\label{tbl:neg_can_assign}
\end{table}
Note that this compass and ruler construction can be also performed easily by origami folds
if we are given a flat piece of paper with creases as in Figure $\ref{fig:dev_0}$.

Now let us find an equation which $\psi_\Lt =\psi_\Lt (D_\can )$ or $\rho_\Lt =\rho_\Lt (D_\can )$ in Construction $\ref{const:neg_can}$ satisfies.
In the resulting extrusion, noting that ridge $AB$ overlaps with $AD_\can$, we need
\begin{equation}\label{eq:ADG=ABG}
\angle AD_\can G_\Rt =\angle ABG_\Rt .
\end{equation}
This is rewritten as
\begin{equation}\label{eq:DA_EE=BA_EE}
e(\ora{D_\can A})\cdot e(\ora{E_\Lt E_\Rt})=e(\ora{BA})\cdot e(\ora{E_\Lt E_\Rt}),
\end{equation}
where where $e(\ora{v})$ denotes a unit direction vector of $\ora{v}$, i.e., $e(\ora{v})=\ora{v}/\norm{v}$, and
$\ora{BA}$ is a $3$D vector in the resulting extrusion while $\ora{D_\can A}$ is a $2$D vector in the development.
Finally, we find a desired equation of $\rho_\Lt =\rho_\Lt (D_\can )$ as follows.
\begin{theorem}[\cite{Doi21}, Section $3.3$]\label{thm:eq_neg_can}
Equation $\eqref{eq:DA_EE=BA_EE}$ (and also equation $\eqref{eq:ADG=ABG}$) is equivalent to
\begin{equation}\label{eq:DA_EE=BA_EE_alt_1}
(V_1-rV_2)\cdot\tan\rho_\Lt =W,
\end{equation}
where $r$ is given by $\eqref{eq:r}$, and $V_1,V_2,W$ are given by
\begin{align*}
V_1&=\sin (\beta_\Lt +\gamma_\Lt )\cos\beta_\Rt +\sin (\beta_\Rt +\gamma_\Rt )\cos\beta_\Lt ,\\
V_2&=\sin(\beta_\Lt +\beta_\Rt +\gamma ),\\
W&=\cos (\beta_\Lt +\gamma_\Lt )\cos\beta_\Rt -\cos (\beta_\Rt +\gamma_\Rt )\cos\beta_\Lt .
\end{align*}
\end{theorem}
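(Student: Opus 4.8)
The plan is to translate the vector equation $\eqref{eq:DA_EE=BA_EE}$ into coordinates and compute both sides as explicit trigonometric expressions in the angle parameters and in $\rho_\Lt$. First I would set up a coordinate system in the development with $A$ at the origin and the bisector $AP$ of $\angle B_\Lt AB_\Rt$ along a convenient axis, so that $B_\Lt$ and $B_\Rt$ are obtained by rotating a unit vector by $\mp\gamma/2$ (using $\phi_\Lt + \phi_\Rt = \gamma$ from Definition~$\ref{def:angles}$). The point $C$ then has position controlled by the ratio $r = \norm{AC}/\norm{AB}$ given by $\eqref{eq:r}$ and the angles $\gamma_\sigma$, and $D_\can$ sits on the minor arc $\arc{B_\Lt B_\Rt}$ at angular position determined by $\psi_\Lt = \psi_\Lt(D_\can)$ (equivalently $\phi_\Lt$, via $\phi_\sigma + \psi_\sigma = \gamma_\sigma$). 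The key auxiliary fact I would invoke is $\eqref{eq:rho}$, which expresses $\rho_\sigma$ in terms of $\psi_\sigma$ and $r$; this lets me convert freely between $\psi_\Lt$ and $\rho_\Lt$ as the unknown, and I will phrase the final answer in terms of $\tan\rho_\Lt$ since that is what $\eqref{eq:DA_EE=BA_EE_alt_1}$ asks for.

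Next I would compute the two direction vectors appearing in $\eqref{eq:DA_EE=BA_EE}$. The right-hand side involves $\ora{BA}$ as a $3$D vector in the folded extrusion and $\ora{E_\Lt E_\Rt}$; the left-hand side involves the $2$D vector $\ora{D_\can A}$ in the development and the same $\ora{E_\Lt E_\Rt}$. For the $3$D side I would use that in the extrusion the side faces meet $AB$ at the prescribed angles $\beta_\sigma$ and carry the pleat directions rotated by $\delta_\sigma$, so that the edge $E_\Lt E_\Rt$ (the ridge of the two side faces after folding, or rather the fold line between the outgoing flaps) has a direction expressible through $\beta_\Lt, \beta_\Rt, \gamma_\Lt, \gamma_\Rt$. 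The combinations $V_1, V_2, W$ are exactly the bilinear expressions one gets from dotting a vector built from $(\beta_\sigma+\gamma_\sigma)$-rotations against one built from $\beta_\sigma$-rotations, together with the $\gamma$-sum term $\sin(\beta_\Lt+\beta_\Rt+\gamma)$; recognizing this pattern is what organizes the computation. Taking the difference of the two sides of $\eqref{eq:DA_EE=BA_EE}$ and clearing the denominator $r - \cos\psi_\Lt$ from $\eqref{eq:rho}$, the dependence on the unknown collapses to the linear form $(V_1 - rV_2)\tan\rho_\Lt - W = 0$, which is $\eqref{eq:DA_EE=BA_EE_alt_1}$.

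The equivalence of $\eqref{eq:ADG=ABG}$ and $\eqref{eq:DA_EE=BA_EE}$ I would treat first, and it is essentially definitional: since ridge $AB$ in the extrusion overlaps $AD_\can$ in the development (the gadget is built precisely so that the flap along $AD_\can$ folds up onto $AB$), the angle $\angle AD_\can G_\Rt$ measured in the flat paper equals the angle $\angle ABG_\Rt$ measured in $3$D, and both are the angle between the relevant edge direction and $E_\Lt E_\Rt$; equality of angles between unit vectors is equality of dot products, giving $\eqref{eq:DA_EE=BA_EE}$. I expect the main obstacle to be bookkeeping rather than conceptual: correctly orienting all the rotations (the $\Lt$/$\Rt$ sign conventions of Convention~$\ref{conv}$, the clockwise-for-$\Lt$ versus counterclockwise-for-$\Rt$ rule for $\delta_\sigma$, and the signs of $\gamma_\sigma$ which may go negative when $\delta_\sigma > \pi/2$ per Remark~$\ref{rem:ext_delta}$), and checking that the $3$D direction of $E_\Lt E_\Rt$ is computed consistently with the way the side faces were glued. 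Once the vectors are written down correctly, verifying that the $\psi_\Lt$-dependence reduces to the stated linear equation in $\tan\rho_\Lt$ is a routine, if somewhat lengthy, trigonometric simplification of the kind already carried out in the proof of Proposition~$\ref{prop:tan_rho_D'}$.
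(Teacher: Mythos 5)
The paper does not actually prove this theorem: it is cited verbatim from \cite{Doi21}, Section~$3.3$, so there is no in-paper proof to compare your attempt against. What you have written is a plan rather than a proof, and while the plan is pointed in a plausible direction (coordinatize the development, express the two dot products of $\eqref{eq:DA_EE=BA_EE}$ trigonometrically, simplify), the step where the actual content lies is left as an assertion: you say that after clearing the denominator $r-\cos\psi_\Lt$ from $\eqref{eq:rho}$ ``the dependence on the unknown collapses to the linear form $(V_1-rV_2)\tan\rho_\Lt=W$,'' and you say that $V_1,V_2,W$ ``are exactly the bilinear expressions one gets from dotting a vector built from $(\beta_\sigma+\gamma_\sigma)$-rotations against one built from $\beta_\sigma$-rotations.'' Neither claim is derived. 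The nontrivial part of this theorem is precisely that the coefficients of the final equation are \emph{constant} in $\rho_\Lt$; recognizing the pattern after the fact is not the same as showing that the simplification lands there.

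There is also a concrete point your narrative glosses over in a way that would trip up the computation: the points $E_\Lt$ and $E_\Rt$ are defined as intersections of $m_\sigma$ with the bisector of $\angle B_\sigma A D_\can$, so $\ora{E_\Lt E_\Rt}$ is \emph{not} a fixed vector built from $\beta_\sigma$-rotations alone --- it moves with $D_\can$, i.e.\ it carries its own $\psi_\Lt$-dependence, and so does the 3D direction of $\ora{BA}$ (via the dihedral assembly of the two side faces). Both sides of $\eqref{eq:DA_EE=BA_EE}$ therefore depend on the unknown, and one has to show that after substituting $\eqref{eq:rho}$ all of this dependence combines into a single $\tan\rho_\Lt$ factor with coefficients $V_1-rV_2$ and $W$ depending only on $\beta_\sigma,\gamma_\sigma,\gamma,r$. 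You also speak of ``the 3D direction of $E_\Lt E_\Rt$,'' but $E_\Lt E_\Rt$ lies on the fixed (unfolded) part of the sheet and stays in the plane $\{z=0\}$; only $\ora{BA}$ needs to be computed in 3D. The reduction of $\eqref{eq:ADG=ABG}$ to $\eqref{eq:DA_EE=BA_EE}$ you argue correctly and briefly; that part is fine. To turn the rest into a proof you would need to actually write out $e(\ora{D_\can A})$, $e(\ora{E_\Lt E_\Rt})$, and the folded $e(\ora{BA})$ in your coordinate frame and carry through the trigonometry, along the lines of \cite{Doi21}, Section~$3.3$, rather than asserting that it works out.
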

We have the existence and uniquness of a solution of $\eqref{eq:DA_EE=BA_EE_alt_1}$ as follows.
\begin{theorem}[\cite{Doi21}, Theorem $3.26$]\label{thm:exist_rho}
Suppose $\alpha ,\beta_\Lt ,\beta_\Rt ,\delta_\Lt$ and $\delta_\Rt$ satisfy
conditions $(\mathrm{i})$, $(\mathrm{ii})$ and $(\mathrm{iii.a})$--$(\mathrm{iii.c})$ of Construction $\ref{const:cond}$.
Then there exists a solution $\rho_\Lt\in (\gamma_\Rt +\delta_\Rt -\pi /2, \pi /2-\gamma_\Lt -\delta_\Lt )$
of equation $\eqref{eq:DA_EE=BA_EE_alt_1}$, which is given by
\begin{equation*}
\rho_\Lt =\tan^{-1}\left(\frac{W}{V_1-rV_2}\right) .
\end{equation*}
Thus there exists a unique point $D_\can$ in open minor arc $\arc{B_\Lt B_\Rt}\setminus\{ B_\Lt ,B_\Rt\}$ with center $A$ such that 
$\rho_\Lt =\rho_\Lt (D_\can )$, and $\psi_\Lt =\psi_\Lt (D_\can )\in (-\gamma_\Rt ,\gamma_\Lt )$ given by
\begin{equation*}
\psi_\Lt =\sin^{-1}(r\sin\rho_\Lt )-\rho_\Lt
\end{equation*}
is a unique solution of $\eqref{eq:ADG=ABG}$.
\end{theorem}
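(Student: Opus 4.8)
The plan is to establish the existence and uniqueness of a solution $\rho_\Lt$ of equation $\eqref{eq:DA_EE=BA_EE_alt_1}$ by reducing the problem to showing that the coefficient $V_1 - rV_2$ does not vanish and that the resulting value of $\tan^{-1}(W/(V_1-rV_2))$ lies in the prescribed open interval $(\gamma_\Rt +\delta_\Rt -\pi/2,\ \pi/2-\gamma_\Lt-\delta_\Lt)$, which by Definition $\ref{def:angles}$ is exactly the range $\rho_\Lt(\arc{B_\Lt B_\Rt})$ of $\rho_\Lt$ on the open minor arc; monotonicity of $\rho_\Lt$ along the arc (already used in the proof of Theorem $\ref{thm:crit_equiv}$) then gives the unique point $D_\can$, and the formula $\psi_\Lt = \sin^{-1}(r\sin\rho_\Lt)-\rho_\Lt$ follows by inverting the relation $\eqref{eq:rho}$, namely $\tan\rho_\Lt = \sin\psi_\Lt/(r-\cos\psi_\Lt)$, together with the fact that $\phi_\Lt+\psi_\Lt=\gamma_\Lt$ confines $\psi_\Lt$ to $(-\gamma_\Rt,\gamma_\Lt)$.

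First I would record the trigonometric identities that put $V_1$, $V_2$, $W$ into a more transparent form. Writing $\beta_\Lt+\gamma_\Lt$ and $\beta_\Rt+\gamma_\Rt$ in terms of the half-angle sums, one can use product-to-sum formulas to rewrite $V_1 = \tfrac12[\sin(\beta_\Lt+\beta_\Rt+\gamma_\Lt-\beta_\Rt) + \dots]$ and similarly for $W$; the point is to show $W$ and $V_1-rV_2$ are comparable in a way that keeps the arctangent inside the interval. Then I would substitute the expression $\eqref{eq:r}$ for $r$, i.e. $r = 1/(\cos\gamma_\Lt - \sin\gamma_\Lt\tan\delta_\Lt)$ (or its $\delta_\sigma=\pi/2$ variant), and clear denominators so that $\eqref{eq:DA_EE=BA_EE_alt_1}$ becomes a genuinely linear equation in $\tan\rho_\Lt$ with coefficients that are explicit in $\beta_\Lt,\beta_\Rt,\gamma_\Lt,\gamma_\Rt,\delta_\Lt,\delta_\Rt$.

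The main obstacle will be verifying the two endpoint inequalities, i.e. that the solution
\begin{equation*}
\rho_\Lt = \tan^{-1}\!\left(\frac{W}{V_1 - rV_2}\right)
\end{equation*}
strictly satisfies $\gamma_\Rt + \delta_\Rt - \pi/2 < \rho_\Lt < \pi/2 - \gamma_\Lt - \delta_\Lt$. Equivalently, after taking tangents (being careful about the sign of $V_1-rV_2$ and the quadrant), this amounts to two inequalities of the form $W \lessgtr (V_1-rV_2)\tan(\cdot)$ which, once $r$ is substituted and everything is multiplied out, should reduce to the triangle-type conditions $(\mathrm{i})$ and the pleat conditions $(\mathrm{iii.a})$--$(\mathrm{iii.c})$ — for instance $\beta_\Lt+\beta_\Rt+\gamma/2>\pi$ and $\gamma+\delta_\Lt+\delta_\Rt<\pi$. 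The key check is that the boundary cases $\beta_\sigma+\gamma/2+\delta_{\sigma'}=\pi$ (appearing already in Proposition $\ref{prop:rel_B'_D'}$ and Proposition $\ref{prop:crit_num}$) correspond exactly to $D_\can$ approaching an endpoint $B_{\sigma'}$, so the strict inequalities in the hypotheses translate into the strict endpoint inequalities for $\rho_\Lt$.

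Finally, for the last assertions: once $D_\can$ is the unique arc point with $\rho_\Lt(D_\can)=\rho_\Lt$, the value $\psi_\Lt(D_\can)$ is determined by solving $\eqref{eq:rho}$ for $\psi_\Lt$. Since $\rho_\Lt = \sin\psi_\Lt/(r-\cos\psi_\Lt)$ rearranges to $r\sin\rho_\Lt = \sin\rho_\Lt\cos\psi_\Lt + \cos\rho_\Lt\sin\psi_\Lt = \sin(\psi_\Lt+\rho_\Lt)$, we get $\psi_\Lt + \rho_\Lt = \sin^{-1}(r\sin\rho_\Lt)$ — choosing the branch of $\sin^{-1}$ forced by $\psi_\Lt\in(-\gamma_\Rt,\gamma_\Lt)$ and $\rho_\Lt$ in its range — which is precisely the stated formula; and since $\eqref{eq:ADG=ABG}$ was shown in Theorem $\ref{thm:eq_neg_can}$ to be equivalent to $\eqref{eq:DA_EE=BA_EE_alt_1}$, the point $D_\can$ (equivalently $\psi_\Lt$) is the unique solution of $\eqref{eq:ADG=ABG}$, completing the argument.
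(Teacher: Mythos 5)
The statement you are proving is not actually proved in this paper: it is Theorem~$\ref{thm:exist_rho}$, which is explicitly recalled from \cite{Doi21}, Theorem~$3.26$, and stated without proof. So there is no internal proof here to compare against.

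Taken on its own terms, your sketch correctly identifies the shape of a proof and gets the easy part right: from $\eqref{eq:rho}$ (read as $\tan\rho_\Lt=\sin\psi_\Lt/(r-\cos\psi_\Lt)$, which is clearly the intended meaning given how $\eqref{eq:rho}$ is used in the proof of Proposition~$\ref{prop:tan_rho_D'}$) you correctly derive $r\sin\rho_\Lt=\sin(\psi_\Lt+\rho_\Lt)$ and hence $\psi_\Lt=\sin^{-1}(r\sin\rho_\Lt)-\rho_\Lt$, and the branch choice is justified because $\psi_\Lt+\rho_\Lt\in(\delta_\Rt-\pi/2,\ \pi/2-\delta_\Lt)\subset[-\pi/2,\pi/2]$ by the stated ranges. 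Uniqueness via monotonicity of $\rho_\Lt$ on the minor arc is also sound and matches how the paper uses that monotonicity elsewhere.

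However, the argument as written has a genuine gap exactly where the theorem has content. You acknowledge that ``the main obstacle will be verifying the two endpoint inequalities,'' but you then assert only that after clearing denominators these ``should reduce to'' conditions $(\mathrm{i})$ and $(\mathrm{iii.a})$--$(\mathrm{iii.c})$, without carrying out the computation. Likewise you never actually establish that $V_1-rV_2\neq 0$, which is needed even to make sense of $\tan^{-1}\bigl(W/(V_1-rV_2)\bigr)$, nor do you pin down its sign, which matters for the quadrant of $\rho_\Lt$ (the theorem asserts $\rho_\Lt$ lies in an interval of length $\pi-\gamma-\delta_\Lt-\delta_\Rt$, which can contain values of either sign, so you cannot sidestep the sign analysis). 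In short: the plan is reasonable, but the two inequalities $\gamma_\Rt+\delta_\Rt-\pi/2<\tan^{-1}(W/(V_1-rV_2))<\pi/2-\gamma_\Lt-\delta_\Lt$ together with $V_1-rV_2\neq 0$ are the theorem, and they remain unproved. To make this a proof you would need to substitute $r$ from $\eqref{eq:r}$, expand $V_1-rV_2$ and $W$ explicitly in $\beta_\Lt,\beta_\Rt,\gamma_\Lt,\gamma_\Rt,\delta_\Lt,\delta_\Rt$, and show these inequalities are strict consequences of $\beta_\Lt+\beta_\Rt+\gamma/2>\pi$, $\beta_\sigma+\gamma/2<\pi$, $\delta_\sigma<\beta_\sigma$, and $\gamma+\delta_\Lt+\delta_\Rt<\pi$; that calculation is the substance of \cite{Doi21}, Theorem~$3.26$, and cannot be replaced by a ``should reduce to.''
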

\begin{lemma}\label{lem:B'C_perp_GG}
In the crease pattern resulting in Construction $\ref{const:neg_can}$ using $D_\can$ given in procedure $(1)$,
segment $G'_\Lt G'_\Rt$ is perpendicular to segment $B'C$.
\end{lemma}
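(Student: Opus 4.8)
The plan is to show that $B'C \perp G'_\Lt G'_\Rt$ by reducing it, via parallelism, to a perpendicularity that is built into the construction. Recall from Construction $\ref{const:neg_can}$, $(3)$ that segment $G'_\Lt G'_\Rt$ is drawn parallel to $E_\Lt E_\Rt$ (it passes through $D_\can$). Hence $B'C \perp G'_\Lt G'_\Rt$ is equivalent to $B'C \perp E_\Lt E_\Rt$. So the whole statement collapses to proving that the segment $B'C$ — where $B'$ is the intersection of the two perpendiculars to $k_\Lt$ through $B_\Lt$ and to $k_\Rt$ through $B_\Rt$, as in procedure $(1)$ — is perpendicular to the common direction of the outgoing pleats, which is the direction of $\ell_\sigma$ (equivalently of $m_\sigma$, equivalently of $E_\Lt E_\Rt$).

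First I would set up coordinates adapted to the pleat direction: take the $x$-axis along the common direction of $\ell_\Lt ,\ell_\Rt ,m_\Lt ,m_\Rt$ (they are all parallel), so that $E_\Lt E_\Rt$ is horizontal and the claim becomes ``$B'$ and $C$ have the same $x$-coordinate.'' Recall from Construction $\ref{const:cond}$, $(2)$ that $C$ is the common foot of the perpendiculars to $\ell_\Lt$ through $B_\Lt$ and to $\ell_\Rt$ through $B_\Rt$; in these coordinates that means $C$ is the unique point whose perpendicular projection onto $\ell_\Lt$ is $B_\Lt$ and onto $\ell_\Rt$ is $B_\Rt$, i.e. $C - B_\sigma$ is vertical for each $\sigma$. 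Thus $C$ lies on the vertical line through $B_\Lt$ and also on the vertical line through $B_\Rt$ — which forces $B_\Lt$ and $B_\Rt$ to share an $x$-coordinate with $C$. Next I would bring in $B'$: by procedure $(1)$, $B' - B_\sigma \perp k_\sigma$. The key geometric input is the relation between the creases $k_\sigma$ and $j_\sigma$ and the pleat lines, namely that $k_\sigma$ (together with $j_\sigma$) is the reflection data making the pleat $(\ell_\sigma ,m_\sigma )$ consistent — concretely, $\angle(AB_\sigma , k_\sigma)$ and $\angle(AB_\sigma ,\ell_\sigma)$ are related by $\delta_\sigma$ and $\beta_\sigma$ through the construction in Construction $\ref{const:cond}$ and Definition $\ref{def:crit_geom}$, $(1)$. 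Using $\phi'_\sigma(D'_\sigma)=2\beta_\sigma$ and the definition of $D'_\sigma$ as the second intersection of the perpendicular to $k_\sigma$ through $B_\sigma$ with $c_A$, I would compute the direction of $B_\sigma D'_\sigma$ (hence of $B_\sigma B'$) and check it is vertical in these coordinates — equivalently, that $B'$ shares the $x$-coordinate of $B_\sigma$, hence of $C$.

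The main obstacle is bookkeeping the angles: one must carefully track how the direction of $k_\sigma$ sits relative to $AB_\sigma$ and to the pleat direction, keeping the left/right sign conventions of Convention $\ref{conv}$, $(5)$ straight, and handle the case $\delta_\sigma > \pi/2$ (where $\gamma_\sigma < 0$, as in Remark $\ref{rem:ext_delta}$) so that the perpendicular-foot point $C$ still lies on the correct side. A clean way to avoid the case analysis is to argue synthetically: both $B_\Lt$, $B_\Rt$, $C$ lie on a line perpendicular to the pleat direction (the line $B_\Lt B_\Rt$ extended, since $AP$ bisects $\angle B_\Lt AB_\Rt$ and is perpendicular to $B_\Lt B_\Rt$, while the $m_\sigma$'s are parallel to $\ell_\sigma$ and perpendicular to $B_\sigma C$), and then show $B'$ lies on that same perpendicular line by using $B'B_\sigma \perp k_\sigma$ together with the fact that $k_\sigma$ makes the ``right'' angle with $B_\Lt B_\Rt$ — precisely the angle recorded by $\phi'_\sigma(D'_\sigma)=2\beta_\sigma$. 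Once $B_\Lt$, $B_\Rt$, $C$, $B'$ are collinear on a line orthogonal to the pleat direction, $B'C$ is orthogonal to that direction and therefore to $E_\Lt E_\Rt$, hence to the parallel segment $G'_\Lt G'_\Rt$, as desired. I would then close by noting this is exactly the horizontal/vertical reduction above, so the two approaches agree and the lemma follows.
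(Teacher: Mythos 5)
Your opening reduction is fine: since $G'_\Lt G'_\Rt$ is drawn parallel to $E_\Lt E_\Rt$ in procedure $(3)$ of Construction~\ref{const:neg_can}, it suffices to show $E_\Lt E_\Rt\perp B'C$. But the way you then try to establish this rests on a false premise, and the whole argument collapses.

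You set up coordinates ``along the common direction of $\ell_\Lt,\ell_\Rt,m_\Lt,m_\Rt$ (they are all parallel).'' They are not. By Construction~\ref{const:cond}, $\ell_\Lt$ has direction $\ora{AB_\Lt}$ rotated clockwise by $\delta_\Lt$ and $\ell_\Rt$ has direction $\ora{AB_\Rt}$ rotated counterclockwise by $\delta_\Rt$; these are parallel only in the exceptional case $\delta_\Lt+\delta_\Rt=\gamma$. Consequently $m_\Lt\parallel\ell_\Lt$ and $m_\Rt\parallel\ell_\Rt$ are also not mutually parallel, and there is no ``pleat direction'' along which to set the $x$-axis. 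Your subsequent conclusions inherit this error: the claim that $C-B_\Lt$ and $C-B_\Rt$ are both ``vertical'' would force $B_\Lt$, $B_\Rt$, $C$ to be collinear, but Construction~\ref{const:cond}, $(3)$ explicitly treats $\triangle CB_\Lt B_\Rt$ as a genuine triangle with excenter $P$, so these three points are not collinear. Likewise the assertion that $E_\Lt E_\Rt$ is perpendicular to ``the pleat direction'' is unsupported, and the synthetic alternative you sketch (``$B_\Lt$, $B_\Rt$, $C$ lie on a line perpendicular to the pleat direction'') is false for the same reason.

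The missing idea is simple and is what the paper's one-line proof invokes: $E_\Lt E_\Rt$ is the perpendicular bisector of segment $CD_\can$. This is because $E_\sigma$ lies both on $m_\sigma$, the perpendicular bisector of $B_\sigma C$, giving $\norm{E_\sigma C}=\norm{E_\sigma B_\sigma}$, and on the bisector of $\angle B_\sigma AD_\can$, which — since $\norm{AB_\sigma}=\norm{AD_\can}$, both being radii of $c_A$ — is the perpendicular bisector of $B_\sigma D_\can$, giving $\norm{E_\sigma B_\sigma}=\norm{E_\sigma D_\can}$. Hence $\norm{E_\sigma C}=\norm{E_\sigma D_\can}$ for both $\sigma$, so $E_\Lt E_\Rt$ is the perpendicular bisector of $CD_\can$. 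As $D_\can$ lies on segment $B'C$ by procedure $(1)$, the lines $CD_\can$ and $B'C$ coincide, so $E_\Lt E_\Rt\perp B'C$, and by parallelism so is $G'_\Lt G'_\Rt$. Note that this argument never refers to the pleat directions at all; trying to route through them was the wrong turn.
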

\begin{proof}
This is obvious because $E_\Lt E_\Rt$ is a perpendicular bisector of $CD_\can$ and $G'_\Lt G'_\Rt$ is parallel to $E_\Lt E_\Rt$.
\end{proof}
Now we prove the equivalence of the geometric and the numerical definition of the canonical dividing point $D_\can$.
\begin{theorem}\label{thm:coinc_D}
In Construction $\ref{const:neg_can}$, the canonical dividing point $D_\can$ given geometrically in procedure $(1)$ is always constructible,
and is the same as that given numerically in procedure $(1')$.
\end{theorem}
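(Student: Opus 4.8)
The plan is to show two things: first, that the geometric construction in procedure $(1)$ actually produces a point (i.e.\ segment $B'C$ really does meet the open minor arc $\arc{B_\Lt B_\Rt}$), and second, that this point satisfies the defining numerical equation of $D_\can$, so that by the uniqueness in Theorem~\ref{thm:exist_rho} it must coincide with the point produced by procedure $(1')$. Since Theorem~\ref{thm:exist_rho} already guarantees a \emph{unique} $D_\can$ with $\rho_\Lt =\rho_\Lt(D_\can)$ and this point lies in the open arc, it is enough to verify that the geometric construction yields exactly this value of $\rho_\Lt$; constructibility will then come for free because the target point is already known to exist in the open arc.

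The key computation is to read off $\rho_\Lt$ of the geometrically constructed point and compare it with $\tan^{-1}\bigl(W/(V_1-rV_2)\bigr)$ from Theorem~\ref{thm:exist_rho}. First I would recall that $B'$ is the intersection of the perpendicular to $k_\Lt$ through $B_\Lt$ and the perpendicular to $k_\Rt$ through $B_\Rt$; these are exactly the lines carrying $B_\Lt D'_\Lt$ and $B_\Rt D'_\Rt$ from Definition~\ref{def:crit_geom}~$(1)$, so $B'$ here is the same point $B'$ as there. Next I would express the direction of segment $CD'_\sigma$, equivalently of $CB'$ once we know $B'$, in terms of the angle parameters: the line $B_\sigma D'_\sigma$ makes a known angle with $AB_\sigma$ (since $\phi'_\sigma(D'_\sigma)=2\beta_\sigma$ pins down $D'_\sigma$ on $c_A$), and chasing angles in triangle $CB_\Lt B_\Rt$ together with the inscribed-angle relations from Definition~\ref{def:angles} gives $\angle B_\Lt C B'$, hence $\rho_\Lt$ of the intersection of $CB'$ with the minor arc, purely as a function of $\beta_\Lt,\beta_\Rt,\gamma_\Lt,\gamma_\Rt$ and $r$. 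Then the task reduces to the trigonometric identity
\begin{equation*}
\frac{\sin(\angle B_\Lt C B')}{\,r-\cos(\angle B_\Lt C B')\,}=\frac{W}{V_1-rV_2},
\end{equation*}
which should fall out after expanding the numerators and denominators using the addition formulas; I expect $V_1,V_2,W$ to appear precisely because the direction of $B_\Lt D'_\Lt$ involves $\beta_\Lt+\gamma_\Lt$ and that of $B_\Rt D'_\Rt$ involves $\beta_\Rt+\gamma_\Rt$, and $C$ is the common apex.

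An alternative, and probably cleaner, route avoids coordinates entirely: show directly that the point $D_\can^{\mathrm{geom}}$ obtained in $(1)$ satisfies $\angle AD_\can^{\mathrm{geom}}G_\Rt=\angle ABG_\Rt$, i.e.\ equation~\eqref{eq:ADG=ABG}. By Lemma~\ref{lem:B'C_perp_GG}, once $D_\can^{\mathrm{geom}}$ is used as the dividing point the resulting segment $G'_\Lt G'_\Rt$ is perpendicular to $B'C$; since $E_\Lt E_\Rt$ is the perpendicular bisector of $CD_\can^{\mathrm{geom}}$ and $B'$ lies on both perpendiculars $B_\sigma D'_\sigma$, a short angle chase shows that reflecting $B'$ across $E_\Lt E_\Rt$ lands it in the configuration that forces the folded images of $\ora{D_\can A}$ and $\ora{BA}$ to have equal inner products with $\ora{E_\Lt E_\Rt}$, which is exactly \eqref{eq:DA_EE=BA_EE}. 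Either way, the main obstacle is the same: carefully identifying that the geometric ``$B'$'' coincides with the object already analysed in Definition~\ref{def:crit_geom}/Theorem~\ref{thm:eq_neg_can}, and then matching the resulting angle to $\rho_\Lt=\tan^{-1}\bigl(W/(V_1-rV_2)\bigr)$ through the addition-formula bookkeeping; constructibility is then automatic since Theorem~\ref{thm:exist_rho} already places the target point strictly inside the open minor arc. I would close by invoking the monotonicity of $\rho_\Lt$ on $\arc{B_\Lt B_\Rt}$ (used in the proof of Theorem~\ref{thm:crit_equiv}) to conclude that the two constructions give the same point.
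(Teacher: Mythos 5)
Your high-level framework is right: show that the geometrically constructed point satisfies the defining equation~\eqref{eq:ADG=ABG} and then invoke the uniqueness in Theorem~\ref{thm:exist_rho}. But neither of your two routes actually performs the crucial verification, and the paper's key idea is not present in either. The paper does not do the addition-formula bookkeeping of your first route; instead it argues kinematically. It introduces the orthogonal projection $\proj_{A,0}$ of the folded gadget onto the base plane and tracks the image of $D_\can$ \emph{during} the fold: as $AD_\can$ rotates about the axis $G'_\Lt G'_\Rt$, its projection travels along the perpendicular to $G'_\Lt G'_\Rt$ through $D_\can$, which by Lemma~\ref{lem:B'C_perp_GG} is exactly the line $B'C$; in the fully folded state the projected apex sits at $B'$ (the common foot of the perpendiculars through $B_\Lt$ to $k_\Lt$ and through $B_\Rt$ to $k_\Rt$). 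These two facts force~\eqref{eq:ADG=ABG} for the geometric $D_\can$, and uniqueness then finishes. This is the concrete content you gesture at as a ``short angle chase,'' but it is a folding/projection argument, not a reflection of $B'$ across $E_\Lt E_\Rt$ --- I do not see how the reflection remark produces \eqref{eq:DA_EE=BA_EE}, and you never spell it out. Your first route (computing $\tan\angle B_\Lt CB'$ and matching it to $W/(V_1-rV_2)$) is in principle viable --- indeed the paper computes $\tan\angle B_\Lt CB'$ in Proposition~\ref{prop:angle_BCD}, but only \emph{after} establishing the coincidence --- yet you explicitly leave the ``addition-formula bookkeeping'' undone, which is exactly where all the work lies.

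There is also a circularity in your constructibility remark. To speak of ``the $\rho_\Lt$ of the intersection of $CB'$ with the minor arc'' you must already know that segment $B'C$ meets the arc; saying constructibility ``comes for free'' because the unique solution lies on the open arc only shows that the \emph{ray} from $C$ through $B'$ hits the arc, not that $D_\can$ lies between $C$ and $B'$. The paper settles this with one sentence: minor arc $\arc{B_\Lt B_\Rt}$ is contained in quadrilateral $B'B_\Lt CB_\Rt$, so the diagonal $B'C$ must cross it. You should either reproduce that observation or supply the missing ``betweenness'' step.
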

\begin{proof}
Since minor arc $\arc{B_\Lt B_\Rt}$ with center $A$ is contained in quadrilateral $B'B_\Lt CB_\Rt$, segment $B'C$ intersects minor arc $\arc{B_\Lt B_\Rt}$,
which yields the constructiblity of $D_\can$ in $(1)$.

Let $\proj_{A,0}$ be the orthogonal projection of the resulting extrusion to the crease pattern
in the bottom plane $\{ z=0\}$ which fixes $A$ and preserves the angles of the bottom edges.
Then in the process of folding the gadget, the projection $\proj_{A,0}(D_\can )$ of point $D_\can$
starts from $D_\can$ and arrives at $B'$ in segment $D_\can B'$
as segment $AD_\can$ rotates around $G'_\Lt G'_\Rt$ because $D_\can B'$ is perpendicular to $G'_\Lt G'_\Rt$ by Lemma $\ref{lem:B'C_perp_GG}$.
Thus $D_\can$ satisfies equation $\eqref{eq:ADG=ABG}$.
By the uniqueness of the solution of $\eqref{eq:ADG=ABG}$ given in Theorem $\ref{thm:exist_rho}$,
this $D_\can$ determined in $(1)$ coincides with that in $(1')$.
This completes the proof of Theorem $\ref{thm:coinc_D}$.
\end{proof}
Thus we have the following expression of $\angle B_\Lt CD_\can$, which also gives $\rho_\Lt$ in procedure $(1')$ of Construction $\ref{const:neg_can}$.
\begin{proposition}\label{prop:angle_BCD}
We calculate $\angle B_\Lt CD_\can$ and $\angle B_\Rt CD_\can$ as
\begin{equation}\label{eq:angle_BCD}
\begin{aligned}
\tan\angle B_\Lt CD_\can&=\tan\angle B_\Lt CB'=\frac{\sin (\beta_\Lt -\delta_\Lt )}{\displaystyle\frac{\sin\alpha}{\sin (\beta_\Rt +\gamma /2)}
\cdot\frac{\sin (\gamma /2+\delta_\Rt )}{\sin (\gamma +\delta_\Lt +\delta_\Rt )}+\cos (\beta_\Lt -\delta_\Lt )},\\
\tan\angle B_\Rt CD_\can&=\tan\angle B_\Rt CB'=\frac{\sin (\beta_\Rt -\delta_\Rt )}{\displaystyle\frac{\sin\alpha}{\sin (\beta_\Lt +\gamma /2)}
\cdot\frac{\sin (\gamma /2+\delta_\Lt )}{\sin (\gamma +\delta_\Lt +\delta_\Rt )}+\cos (\beta_\Rt -\delta_\Rt )}.
\end{aligned}
\end{equation}
Thus using $\eqref{eq:angle_BCD}$, we can calculate $\rho_\Lt$ and $\rho_\Rt$ by
\begin{equation}\label{eq:rho_alt_2}
\rho_\Lt =\frac{\pi}{2}-(\gamma_\Lt +\delta_\Lt +\angle B_\Lt CD_\can ),\quad\rho_\Rt =(\gamma_\Rt +\delta_\Rt +\angle B_\Rt CD_\can )-\frac{\pi}{2}.
\end{equation}
\end{proposition}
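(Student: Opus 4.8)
The plan is to compute $\angle B_\Lt CD_\can = \angle B_\Lt CB'$ directly from the triangle $B_\Lt B' C$ (and symmetrically for the right side), using the law of sines together with the lengths and angles already pinned down in Construction \ref{const:cond}. By Theorem \ref{thm:coinc_D} the point $D_\can$ of procedure $(1)$ lies on segment $B'C$, so $\angle B_\Lt CD_\can = \angle B_\Lt CB'$ is immediate and we are reduced to a plane-geometry computation in the development, with no $3$D reasoning left. First I would set up the triangle $B_\Lt B' C$: the angle at $B_\Lt$ is $\angle CB_\Lt B' = \pi/2 - (\beta_\Lt - \delta_\Lt)$, since $B'$ lies on the perpendicular to $k_\Lt$ through $B_\Lt$ and $k_\Lt$ makes angle $\beta_\Lt - \delta_\Lt$ with $B_\Lt C$ (this uses $\angle AB_\Lt C = \pi/2 + \delta_\Lt$ from Construction \ref{const:cond}, $(2)$, and the definition of $j_\sigma, k_\sigma$). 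Then $\tan\angle B_\Lt CB' = |B_\Lt B'|\sin\angle CB_\Lt B' / \bigl(|B_\Lt C| + |B_\Lt B'|\cos\angle CB_\Lt B'\bigr)$ is the standard tangent-of-an-angle-in-a-triangle identity, which already has the shape of the claimed formula with $\sin(\beta_\Lt - \delta_\Lt)$ in the numerator and $\cos(\beta_\Lt - \delta_\Lt)$ as the additive term in the denominator.

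Next I would identify the ratio $|B_\Lt B'| / |B_\Lt C|$ with the messy factor $\tfrac{\sin\alpha}{\sin(\beta_\Rt + \gamma/2)} \cdot \tfrac{\sin(\gamma/2 + \delta_\Rt)}{\sin(\gamma + \delta_\Lt + \delta_\Rt)}$. I would do this in two stages. Stage one: express $|B_\Lt B'|$ and $|B_\Rt B'|$ by applying the law of sines in triangle $B_\Lt B' B_\Rt$, whose angles at $B_\Lt$ and $B_\Rt$ are $\pi/2 - (\beta_\Lt - \delta_\Lt)$ and $\pi/2 - (\beta_\Rt - \delta_\Rt)$ and whose base $|B_\Lt B_\Rt|$ is fixed; this gives $|B_\Lt B'|$ in terms of $|B_\Lt B_\Rt|$, $\beta_\sigma$, $\delta_\sigma$ and hence $\gamma$. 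Stage two: express $|B_\Lt C|$ from triangle $AB_\Lt C$ (or from triangle $CB_\Lt B_\Rt$ using $\angle CB_\sigma B_{\sigma'}$), again relative to $|B_\Lt B_\Rt|$. Dividing, the factor $|B_\Lt B_\Rt|$ cancels and, after using $\alpha + \beta_\Lt + \beta_\Rt = 2\pi - \gamma$ to rewrite $\sin\alpha$ and collecting the $\delta$-dependence through $\gamma_\sigma$ via \eqref{eq:gamma_sigma} or \eqref{eq:r}, one should land exactly on the stated quotient. The second equation of \eqref{eq:angle_BCD} follows by the $\Lt \leftrightarrow \Rt$ symmetry of Convention \ref{conv}, $(1)$.

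For the last assertion \eqref{eq:rho_alt_2}, I would just unwind the definitions: by Definition \ref{def:angles}, $\rho_\Lt(D) = \angle ACB_\Lt - \angle DCB_\Lt$, and $\angle ACB_\Lt = \pi/2 - \gamma_\Lt - \delta_\Lt$ follows from the right triangle with legs along $k_\Lt$ and its perpendicular at $B_\Lt$ together with $\angle B_\Lt AC = \gamma_\Lt$ and the $\delta_\Lt$ tilt of the pleat; substituting $\angle B_\Lt CD_\can$ from \eqref{eq:angle_BCD} gives the formula for $\rho_\Lt$, and the $\rho_\Rt$ formula is again the mirror image. The main obstacle I anticipate is purely bookkeeping: keeping the orientation conventions straight (which angles are measured clockwise versus counterclockwise, and the sign of $\gamma_\sigma$ when $\delta_\sigma > \pi/2$, per Remark \ref{rem:ext_delta}) so that the intermediate sines and cosines come out with the correct arguments, and verifying that the law-of-sines manipulations are valid in the degenerate-looking cases where some $\delta_\sigma = \pi/2$ — there one falls back on the alternative expression for $r$ given in Definition \ref{def:angles}. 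Once the angle at $B_\Lt$ in triangle $B_\Lt B' C$ and the length ratio $|B_\Lt B'|/|B_\Lt C|$ are correctly computed, the rest is algebraic simplification of the kind already carried out in the proof of Proposition \ref{prop:tan_rho_D'}.
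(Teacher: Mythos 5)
Your overall plan---work in $\triangle B_\sigma B'C$, pin down the angle at $B_\sigma$, then express $|B_\sigma B'|/|B_\sigma C|$ by the law of sines in the two auxiliary triangles $B_\Lt B'B_\Rt$ and $CB_\Lt B_\Rt$ that share base $B_\Lt B_\Rt$---is exactly what the paper does, and the last paragraph (reducing to $\angle B_\sigma CB'$ via Theorem $\ref{thm:coinc_D}$, and $\angle ACB_\sigma =\pi /2-\gamma_\sigma -\delta_\sigma$ for $\eqref{eq:rho_alt_2}$) is fine. But the decisive angle at $B_\sigma$ is misidentified. In $\triangle B_\sigma B'C$ one has $\angle B'B_\sigma C=\pi -(\beta_\sigma -\delta_\sigma )$, so that writing $\varphi_\sigma =\angle B_\sigma B'C$ and $\theta_\sigma =\angle B_\sigma CB'$ gives $\varphi_\sigma +\theta_\sigma =\beta_\sigma -\delta_\sigma$; your value $\pi /2-(\beta_\sigma -\delta_\sigma )$ is off by $\pi /2$. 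Tracing it through the law of sines $b_\sigma :c_\sigma =\sin\theta_\sigma :\sin\varphi_\sigma$, your angle would yield
\begin{equation*}
\tan\theta_\Lt =\frac{\cos (\beta_\Lt -\delta_\Lt )}{c_\Lt /b_\Lt -\sin (\beta_\Lt -\delta_\Lt )},
\end{equation*}
with sine and cosine swapped and the wrong sign in the denominator relative to $\eqref{eq:angle_BCD}$; the obtuseness of $\angle B'B_\sigma C$ is precisely what produces the $\sin$ in the numerator and the $+\cos$ in the denominator. (Relatedly, the ``standard tangent-of-an-angle-in-a-triangle identity'' as you wrote it should have $|B_\Lt C|-|B_\Lt B'|\cos\angle CB_\Lt B'$ in the denominator; the $+$ only appears after the obtuse angle flips the cosine, so this second slip does not cancel the first.)

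The same confusion carries over to the auxiliary triangle: the angles at $B_\Lt ,B_\Rt$ in $\triangle B_\Lt B'B_\Rt$ are \emph{not} $\pi /2-(\beta_\sigma -\delta_\sigma )$ --- those would be angles toward $C$, not toward $B_{\sigma'}$, and they fail the angle-sum check, since $(\pi -\alpha )+(\pi /2-\beta_\Lt +\delta_\Lt )+(\pi /2-\beta_\Rt +\delta_\Rt )=\gamma +\delta_\Lt +\delta_\Rt <\pi$ by condition (iii.c). The paper instead uses $\sin\angle B'B_\sigma B_{\sigma'}=\sin (\beta_\sigma +\gamma /2)$ together with $\sin\angle B_\Lt B'B_\Rt =\sin\alpha$ in $\triangle B_\Lt B'B_\Rt$, and $\sin\angle CB_\sigma B_{\sigma'}=\sin (\gamma /2+\delta_\sigma )$ with $\sin\angle B_\Lt CB_\Rt =\sin (\gamma +\delta_\Lt +\delta_\Rt )$ in $\triangle CB_\Lt B_\Rt$. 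These give $c_\sigma /b_\sigma$ directly as the displayed product, with no need to detour through $\eqref{eq:gamma_sigma}$ or $\eqref{eq:r}$ as your sketch suggests; the $\delta$-dependence is already built into the angles of $\triangle CB_\Lt B_\Rt$. Once $\angle B'B_\sigma C$ and the two auxiliary triangles' angles are corrected, your plan coincides with the paper's proof.
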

\begin{proof}
For $\sigma =\Lt ,\Rt$, let
\begin{gather*}
b_\sigma =\norm{B_\sigma B'},\quad c_\sigma =\norm{B_\sigma C},\quad d=\norm{B_\Lt B_\Rt}\quad\text{and}\\
\varphi_\sigma =\angle B_\sigma B'C,\quad\theta_\sigma =\angle B_\sigma CB'.
\end{gather*}
Noting that $\angle B'B_\sigma C=\pi -(\beta_\sigma -\delta_\sigma )$, so that $\varphi_\sigma +\theta_\sigma =\beta_\sigma -\delta_\sigma$, we have
\begin{align*}
b_\sigma :c_\sigma&=\sin\theta_\sigma :\sin\varphi_\sigma =\sin\theta_\sigma :\sin (\beta_\sigma -\delta_\sigma -\theta_\sigma )\\
&=\sin\theta_\sigma :\sin (\beta_\sigma -\delta_\sigma )\cos\theta_\sigma -\cos (\beta_\sigma -\delta_\sigma )\sin\theta_\sigma\\
&=1:\sin (\beta_\sigma -\delta_\sigma )/\tan\theta_\sigma -\cos (\beta_\sigma -\delta_\sigma ),
\end{align*}
which gives that
\begin{equation}\label{eq:tan_theta}
\tan\theta_\sigma =\frac{\sin (\beta_\sigma -\delta_\sigma )}{c_\sigma /b_\sigma +\cos (\beta_\sigma -\delta_\sigma )}.
\end{equation}
On the other hand, since we see easily that
\begin{gather*}
\angle B_\Lt B'B_\Rt =\pi -\alpha ,\quad\angle B'B_\sigma B_{\sigma'}=\beta_\sigma +\gamma /2,\quad\text{and}\\
\angle B_\Lt CB_\Rt =\pi -(\gamma +\delta_\Lt +\delta_\Rt ),\quad \angle CB_\sigma B_{\sigma'}=\gamma /2+\delta_\sigma ,
\end{gather*}
we have
\begin{align*}
b_\sigma :d&=\sin\angle B'B_{\sigma'}B_\sigma :\sin\angle B_\Lt B'B_\Rt=\sin (\beta_{\sigma'}+\gamma /2) :\sin\alpha ,\\
c_\sigma :d&= \sin\angle CB_{\sigma'}B_\sigma :\sin\angle B_\Lt CB_\Rt =\sin (\gamma /2+\delta_{\sigma'}):\sin (\gamma +\delta_\Lt +\delta_\Rt ),
\end{align*}
so that
\begin{equation}\label{eq:c/b}
\frac{c_\sigma}{b_\sigma}=\frac{d}{b_\sigma}\cdot\frac{c_\sigma}{d}
=\frac{\sin\alpha}{\sin (\beta_{\sigma'}+\gamma /2)}\cdot\frac{\sin (\gamma /2+\delta_{\sigma'})}{\sin (\gamma +\delta_\Lt +\delta_\Rt )}
\end{equation}
Then combining $\eqref{eq:tan_theta}$ and $\eqref{eq:c/b}$ gives $\eqref{eq:angle_BCD}$ as desired.
This completes the proof of Proposition $\ref{prop:angle_BCD}$.
\end{proof}
\section{Existence of the canonical positive origon gadgets}\label{sec:exist_pos_can}
It is natural to ask whether there exists a positive origon gadget with the same dividing point as a canonical negative one.
The following result gives an affirmative answer to this question.
\begin{theorem}\label{thm:exist_pos_can}
Let $D_\can$ be the canonical dividing point given geometrically in procedure $(1)$ or numerically in procedure $(1')$ of Construction $\ref{const:neg_can}$.
Also, for $\sigma =\Lt ,\Rt$ let $D_\sigma$ be the point given geometrically in Definition $\ref{def:crit_orig}$ or $\ref{def:crit_geom}$,
or given by $\phi_\sigma (D_\sigma )=2\zeta_\sigma$, where $\zeta_\sigma$ is given in Proposition $\ref{prop:crit_num}$.
Then we have $\phi_\Lt (D_\Rt)<\phi_\Lt (D_\can )<\phi_\Lt (D_\Lt)$.
Thus we obtain a positive origon gadget with $D=D_\can$ by Construction $\ref{const:pos}$.
\end{theorem}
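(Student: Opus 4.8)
The plan is to recast the two inequalities as statements about the angle function $\rho_\sigma$ of Definition~\ref{def:angles} and then to read them off from the position, as seen from $C$, of the auxiliary point $B'$. Note first that $B'$ denotes the \emph{same} point in Construction~\ref{const:neg_can},~$(1)$ and in Definition~\ref{def:crit_geom},~$(1)$: the perpendicular to $k_\sigma$ through $B_\sigma$ is exactly the line $B_\sigma D'_\sigma$ (since $D'_\sigma$ lies on it and on $c_A$), so the intersection of the two perpendiculars coincides with the intersection of the segments $B_\Lt D'_\Lt$ and $B_\Rt D'_\Rt$.

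For the reduction, recall that $\phi_\Lt+\phi_\Rt=\gamma$ on the minor arc and that $2\zeta_\sigma=\phi_\sigma(D_\sigma)$ by Definition~\ref{def:crit_geom},~$(4)$. Hence $\phi_\Lt(D_\Rt)<\phi_\Lt(D_\can)<\phi_\Lt(D_\Lt)$ is equivalent to the two strict inequalities $\phi_\Lt(D_\can)<2\zeta_\Lt$ and $\phi_\Rt(D_\can)<2\zeta_\Rt$; and once these hold, $D_\can$ — which is already an interior point of the minor arc by Theorem~\ref{thm:exist_rho} — lies strictly inside the interval of procedure~$(1)$ of Construction~\ref{const:pos}, which gives the last assertion. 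So I would prove $\phi_\sigma(D_\can)<2\zeta_\sigma$ for each $\sigma=\Lt,\Rt$ separately, using the dichotomy of Definition~\ref{def:crit_geom},~$(3)$. If $D_\sigma=B_{\sigma'}$, then $2\zeta_\sigma=\phi_\sigma(B_{\sigma'})=\gamma>\phi_\sigma(D_\can)$ since $D_\can$ is interior to the minor arc, and we are done. Otherwise $D_\sigma$ is the point where segment $CD'_\sigma$ meets the minor arc, so $D_\sigma$ lies on the ray $CD'_\sigma$ and $\rho_\sigma(D_\sigma)=\angle ACB_\sigma-\angle D'_\sigma CB_\sigma=\rho_\sigma(D'_\sigma)$; similarly $D_\can$ lies on segment $B'C$, hence on the ray $CB'$, so $\rho_\sigma(D_\can)=\angle ACB_\sigma-\angle B'CB_\sigma$. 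Since $\rho_\sigma(D)=\angle ACB_\sigma-\angle DCB_\sigma$, with $\angle DCB_\sigma$ vanishing at $D=B_\sigma$ and $\rho_\sigma$ strictly monotone along the minor arc (as used in the proof of Theorem~\ref{thm:crit_equiv}), $\rho_\sigma$ is strictly decreasing in $\phi_\sigma$; hence $\phi_\sigma(D_\can)<\phi_\sigma(D_\sigma)=2\zeta_\sigma$ will follow once we show $\angle B_\sigma CB'<\angle B_\sigma CD'_\sigma$.

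To prove this last inequality I would show that $B'$ lies \emph{strictly} between $B_\sigma$ and $D'_\sigma$ on the chord $B_\sigma D'_\sigma$ of $c_A$. It lies on the chord by the first remark above. It is not an endpoint: $B_\Lt,B_\Rt,D'_\Lt,D'_\Rt$ are four distinct points of $c_A$ — the non-coincidences $D'_\Lt\neq D'_\Rt$ and $D'_\sigma\neq B_{\sigma'}$ being forced by condition~(i) of Construction~\ref{const:cond} via $\phi'_\sigma(D'_\sigma)=2\beta_\sigma$ and $\phi'_\sigma+\phi'_{\sigma'}=2\pi-\gamma$ — so a coincidence $B'=B_\sigma$ or $B'=D'_\sigma$ would force three of these concyclic points onto a line, which is impossible; moreover, by $\eqref{ineq:D'_R<D'_L}$ the endpoints of the chords $B_\Lt D'_\Lt$ and $B_\Rt D'_\Rt$ alternate around $c_A$, so the chords genuinely cross and $B'$ is interior to each. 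Finally $C$ does not lie on the line $B_\sigma D'_\sigma$: otherwise the line $B'C$ would equal it, and its only meetings with $c_A$ are $B_\sigma$ and $D'_\sigma$, contradicting that it meets $c_A$ at the interior point $D_\can$ of the minor arc. Hence $B_\sigma,C,D'_\sigma$ form a genuine triangle, the ray $CB'$ runs strictly inside the angle $\angle B_\sigma CD'_\sigma<\pi$, and $\angle B_\sigma CB'<\angle B_\sigma CD'_\sigma$, as needed.

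The step I expect to be the main obstacle is the configuration bookkeeping of the third paragraph: verifying that the $B'$ of the two constructions is literally the same point, that it is strictly interior to each chord (which is where condition~(i), in the form $\eqref{ineq:D'_R<D'_L}$, is genuinely used), and that $C$ misses the secant lines $B_\sigma D'_\sigma$. Once the configuration is pinned down, the passage through $\rho_\sigma$ and its monotonicity is routine.
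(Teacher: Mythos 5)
Your proof is correct, and it is essentially the same argument as the paper's. Both proofs identify $B'$ with the intersection of the chords $B_\Lt D'_\Lt$ and $B_\Rt D'_\Rt$, split into the two cases from Proposition $\ref{prop:rel_B'_D'}$ (i.e.\ whether $D_\sigma =B_{\sigma'}$ or not), and in the second case conclude from the strict interiority of $B'$ on the chord $B_\sigma D'_\sigma$ that the ray $CB'$ lies strictly inside $\angle B_\sigma CD'_\sigma$, hence $\phi_\sigma (D_\can )<\phi_\sigma (D_\sigma )=2\zeta_\sigma$. The paper states the strict-interiority facts more tersely (as ``$B'$ lies strictly in side $D'_\Rt B_\Rt$ of $\triangle CD'_\Rt B_\Rt$''), while you spell out the configuration bookkeeping — four distinct concyclic points, alternation via $\eqref{ineq:D'_R<D'_L}$ forcing the chords to cross in their interiors, and $C$ off the secant lines — and route the final step explicitly through the monotonicity of $\rho_\sigma$ rather than directly through triangle position, but this is a difference of exposition, not of method.
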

\begin{proof}
Let $B'_\Lt$ and $B'_\sigma$ be as in Definition $\ref{def:crit_geom}$.
Note that $B'$ does not coincide with $D'_\Lt$ or $D'_\Rt$ because if $B'=D'_\sigma$ for either $\sigma$, then we have $D'_\Lt =D'_\Rt$,
which contradicts Theorem $\ref{thm:constrbil_pos}$.
Thus it suffices to prove that $\phi_\Lt (D_\Rt )<\phi_\Lt (D_\can )$.
Indeed, by a symmetric argument interchanging $\Lt$ and $\Rt$, we also find that $\phi_\Rt (D_\Lt )<\phi_\Rt (D_\can )$,
which is equivalent to $\phi_\Lt (D_\can )<\phi_\Lt (D_\Lt )$.

By Proposition $\ref{prop:rel_B'_D'}$, we consider the two cases $\beta_\Rt +\gamma /2+\delta_\Lt\geqslant\pi$ and $\beta_\Rt +\gamma /2+\delta_\Lt <\pi$,
which are equivalent to $\phi'_\Lt (D'_\Rt)\leqslant\phi'_\Lt (B'_\Lt )$ and $\phi'_\Lt (B'_\Lt )<\phi'_\Lt (D'_\Rt)$ respectively.
On the other hand, we see that $\phi'_\Lt (B'_\Lt )<\phi'_\Lt (D'_\Lt )$ and $\phi'_\Lt (D'_\Rt )<\phi'_\Lt (B'_\Rt )$,
and $\phi'_\Lt (D'_\Rt )<\phi'_\Lt (D'_\Lt )$ always hold
from Proposition $\ref{prop:rel_B'_D'}$ and Theorem $\ref{thm:constrbil_pos}$ respectively.

First suppose $\phi'_\Lt (D'_\Rt)\leqslant\phi'_\Lt (B'_\Lt )<\phi'_\Lt (D'_\Lt )$.
Then we have $0=\phi_\Lt (D_\Rt )<\phi_\Lt (D_\Lt )$.
Meanwhile, since $B'$ lies strictly in side $D'_\Lt B_\Lt$ of $\triangle CB_\Lt D'_\Lt$, we have $\phi_\Lt (D_\can )>0$.
Thus we have $\phi_\Lt (D_\Lt )<\phi_\Lt (D_\can )$ as desired.

Next suppose $\phi'_\Lt (B'_\Lt )<\phi'_\Lt (D'_\Rt)<\phi'_\Lt (D'_\Lt )$,
where $\phi'_\Lt (D'_\Rt)<\phi'_\Lt (B'_\Rt)$ also holds due to Proposition $\ref{prop:rel_B'_D'}$.
Then since $B'$ lies strictly in side $D'_\Rt B_\Rt$ of $\triangle CD'_\Rt B_\Rt$, we have $\phi_\Lt (D_\Rt )<\phi_\Lt (D_\can )$ as desired.

This completes the proof of Theorem $\ref{thm:exist_pos_can}$.
\end{proof}
\addtocounter{theorem}{1}
\begin{figure}[htbp]
\centering\includegraphics[width=0.75\hsize]{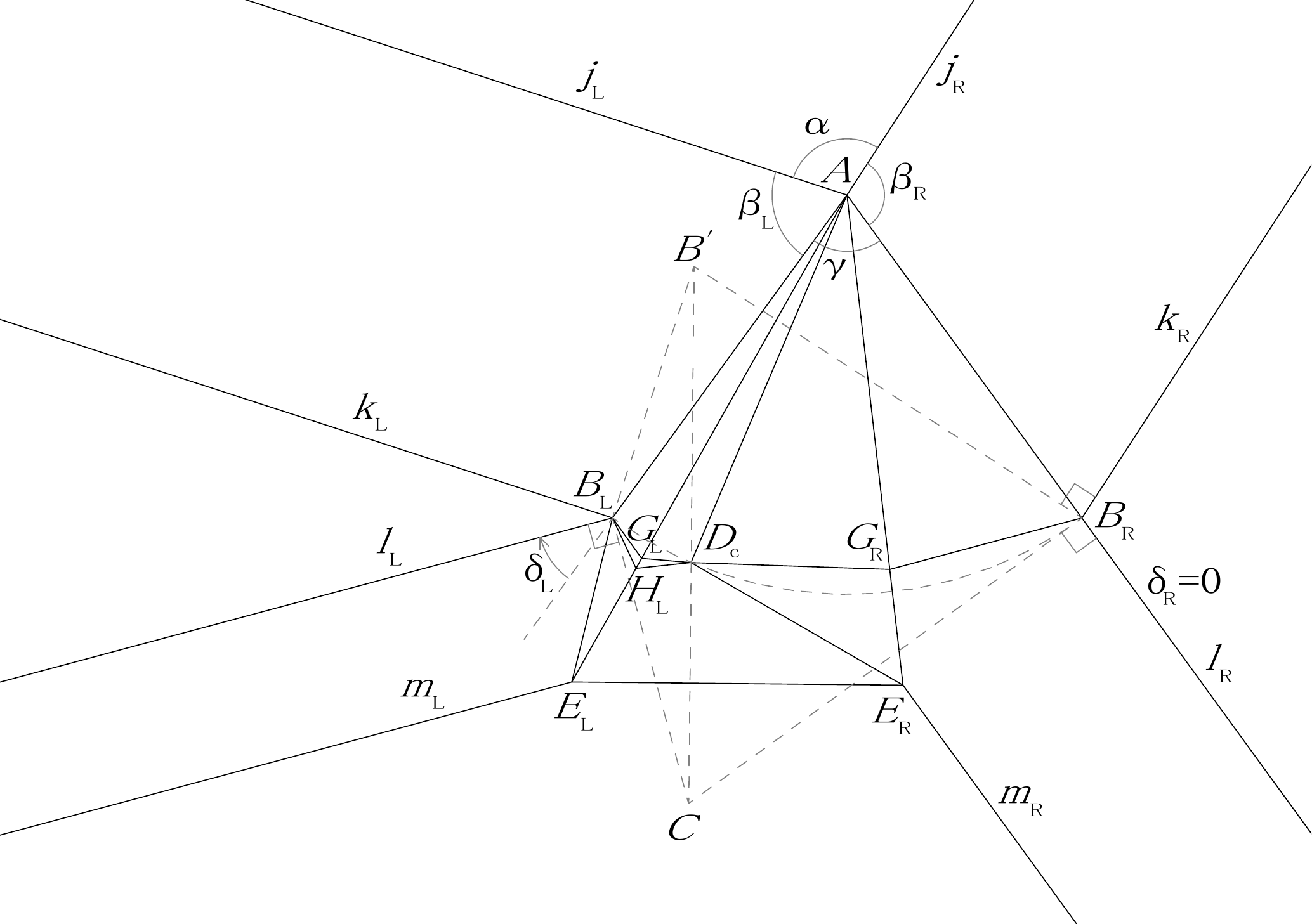}
\caption{Crease pattern of a canonical positive origon gadget}
\label{fig:pos_can_CP}
\end{figure}
We shall say that a positive origon gadget obtained in Construction $\ref{const:pos}$ by setting $D=D_\can$
is a \emph{canonical} positive origon (gadget), whose existence is ensured by Theorem $\ref{thm:exist_pos_can}$.
Also, we shall say that the canonical positive and negative origon resulting from a common development as in Figure $\ref{fig:dev_0}$
together forms a \emph{canonical pair}.
We show in Figure $\ref{fig:pos_can_CP}$ the canonical positive origon obtained from the development in Figure $\ref{fig:dev_0}$,
which forms a canonical pair together with the canonical negative one in Figure $\ref{fig:neg_can_CP}$.
Noting Convention $\ref{conv}$, $(3)$ on the orientation of the development of a $3$D gadget,
we see that the positive and the negative origon which forms a canonical pair engage with each other.

As an application of canonical pairs of origons, we can construct a positive and a negative extrusion made of origon gadgets
from a common crease pattern by replacing each origon with its canonical counterpart as long as there arise no interferences.
(For the details of the interferences caused by positive and negative origon gadgets, see \cite{Doi20}, Section $5$ and \cite{Doi21}, Section $3.4$.)
In particular, by combining Tables $\ref{tbl:pos_assign}$ and $\ref{tbl:neg_can_assign}$,
we obtain a `hybrid' crease pattern from which we can fold both a positive and a negative origon which together form a canonical pair.
We show in Table $\ref{tbl:hybrid}$ the hybrid creases for a canonical pair.
\addtocounter{theorem}{1}
\begin{table}[h]
\begin{tabular}{c|c|c}
&$\delta_\sigma =0$&$\delta_\sigma >0$\\ \hline
\multirow{2}{*}{common}&\multicolumn{2}{c}{$j_\sigma ,k_\sigma ,\ell_\sigma ,m_\sigma ,AB_\sigma ,AE_\sigma ,E_\Lt E_\Rt$}\\ \cline{2-3}
&---&$B_\sigma E_\sigma$\\ \hline
\multirow{2}{*}{positive}&\multicolumn{2}{c}{$AD_\can ,B_\sigma G_\sigma ,D_\can G_\sigma$}\\ \cline{2-3}
&$D_\can E_\sigma$&$B_\sigma H_\sigma ,D_\can H_\sigma$\\ \hline
\multirow{2}{*}{negative}&\multicolumn{2}{c}{$B_\sigma G'_\sigma ,B_\sigma P_\sigma ,G'_\Lt G'_\Rt ,P_\Lt P_\Rt$}\\ \cline{2-3}
&$B_\sigma E_\sigma$&---
\end{tabular}\vspace{0.5cm}
\caption{Hybrid creases for a canonical pair of a positive and a negative origon gadget}
\label{tbl:hybrid}
\end{table}
In general, $G_\sigma$ in the development of the positive origon and $G'$ in that of the negative origon are not identical,
nor are the creases through $G_\sigma$ and $G'_\sigma$.
However, in the following situation, we can reduce the creases in Table $\ref{tbl:hybrid}$.
\begin{proposition}
Consider a canonical pair of a positive and a negative origon gadget constructed from a common development as in Figure $\ref{fig:dev_0}$.
If $\beta_\Lt =\beta_\Rt =\pi /2$, then we have $G_\Lt =G'_\Lt$ and $G_\Rt =G'_\Rt$.
Also, $D_\can$ is the intersection point of segment $AC$ and minor arc $\arc{B_\Lt B_\Rt}$,
so that we have $\phi_\sigma (D_\can )=\gamma_\sigma$ and $\psi_\sigma (D_\can )=0$ for both $\sigma =\Lt ,\Rt$,
where $\gamma_\sigma$ is given by $\eqref{eq:gamma_sigma}$.
\end{proposition}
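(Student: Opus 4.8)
The plan is to verify the claim in two parts: first the statement about $D_\can$, and then deduce $G_\sigma=G'_\sigma$ from it. For the location of $D_\can$, I would start from the geometric description in procedure $(1)$ of Construction $\ref{const:neg_can}$: $B'$ is the intersection of the perpendicular to $k_\Lt$ through $B_\Lt$ and the perpendicular to $k_\Rt$ through $B_\Rt$, and $D_\can$ is the intersection of segment $B'C$ with minor arc $\arc{B_\Lt B_\Rt}$. When $\beta_\Lt=\beta_\Rt=\pi/2$, I expect $k_\sigma$ to be parallel to $AB_\sigma$ (or more precisely, the crease $k_\sigma$ makes the angle $\pi-\beta_\sigma=\pi/2$ with $AB_\sigma$, so the perpendicular to $k_\sigma$ through $B_\sigma$ is the line $AB_\sigma$ itself). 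Hence $B'$ lies on both lines $AB_\Lt$ and $AB_\Rt$, forcing $B'=A$. Then segment $B'C=AC$, and $D_\can$ is the intersection of $AC$ with the minor arc. From this $\psi_\sigma(D_\can)=\angle_\mp CAD_\can=0$ is immediate by Definition $\ref{def:angles}$, and then $\phi_\sigma(D_\can)=\gamma_\sigma-\psi_\sigma(D_\can)=\gamma_\sigma$ follows from the identity $\phi_\sigma+\psi_\sigma=\gamma_\sigma$ recorded in Definition $\ref{def:angles}$.

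For the coincidence of the $G$-points, recall that in Construction $\ref{const:pos}$, $(3)$, $G_\sigma$ is the point on segment $AE_\sigma$ with $\angle AB_\sigma G_\sigma=\pi-\beta_\sigma$, while in Construction $\ref{const:neg_can}$, $(3)$, $G'_\sigma$ is the intersection of segment $AE_\sigma$ with the parallel to $E_\Lt E_\Rt$ through $D_\can$. Both points lie on $AE_\sigma$, so it suffices to show they determine the same angle at $B_\sigma$, i.e. $\angle AB_\sigma G'_\sigma=\pi-\beta_\sigma=\pi/2$. Since $E_\Lt E_\Rt$ is the perpendicular bisector of $CD_\can$ (from Construction $\ref{const:neg_can}$, $(2)$, $E_\sigma$ lies on $m_\sigma$, the perpendicular bisector of $B_\sigma C$, and on the bisector of $\angle B_\sigma AD_\can$), the line $G'_\Lt G'_\Rt$ through $D_\can$ parallel to $E_\Lt E_\Rt$ is perpendicular to $CD_\can=AD_\can$, which in this case is the line $AC$. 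So $G'_\Lt G'_\Rt\perp AC$ at $D_\can$. I then need to relate this to the angle at $B_\sigma$; the cleanest route is to note that $B_\sigma$, $C$, $D_\can$, $A$ all sit on circle $c_A$ or on explicit rays, and with $\beta_\sigma=\pi/2$ the triangle $AB_\sigma C$ has $\angle AB_\sigma C=\pi/2$ (since $\angle AB_\sigma C=2\pi-(\pi/2+\delta_\sigma)$ needs care—I should double-check the precise angle $\angle AB_\sigma C$ from Construction $\ref{const:cond}$), making $AC$ a diameter-type configuration; a short angle chase then gives $\angle AB_\sigma G'_\sigma=\pi/2$.

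The main obstacle I anticipate is the angle bookkeeping around $B'$ and the $G$-points, particularly handling the $\delta_\sigma>0$ case and the orientation conventions (Convention $\ref{conv}$, $(5)$) correctly, since signs of the rotation angles $\delta_\Lt$ versus $\delta_\Rt$ are measured in opposite senses and $\gamma_\sigma$ can in principle be negative. In particular I want to confirm that the identification $B'=A$ is genuinely forced when $\beta_\Lt=\beta_\Rt=\pi/2$ regardless of $\delta_\sigma$: the key point is that $k_\sigma$ is the crease with $\angle AB_\sigma k_\sigma$ determined by $\beta_\sigma-\delta_\sigma$ (as in Proposition $\ref{prop:angle_BCD}$, where $\angle B'B_\sigma C=\pi-(\beta_\sigma-\delta_\sigma)$), so the perpendicular to $k_\sigma$ through $B_\sigma$ makes the complementary angle with line $B_\sigma C$, and I need $\beta_\sigma=\pi/2$ to make that perpendicular pass through $A$; here the fact that $\angle CB_\sigma B_{\sigma'}=\gamma/2+\delta_\sigma$ and $\angle AB_\sigma B_{\sigma'}=\beta_\sigma$ should pin it down. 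Once $B'=A$ is established the rest is a routine consequence of the monotonicity of $\rho_\sigma$ (guaranteeing uniqueness of $D_\can$) together with Theorem $\ref{thm:exist_rho}$, and the $G_\sigma=G'_\sigma$ equality reduces to observing both points lie on $AE_\sigma$ and subtend the same right angle at $B_\sigma$.
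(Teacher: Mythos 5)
Your overall strategy matches the paper's own proof, which consists of the single line ``This follows easily from $B'=A$.'' You correctly identify $B'=A$ as the crux: when $\beta_\sigma=\pi/2$, the perpendicular to $k_\sigma$ through $B_\sigma$ is the line $AB_\sigma$ itself (equivalently, $D'_\sigma$ is the antipode of $B_\sigma$ on $c_A$), so the two perpendiculars meet at $A$; hence $B'C=AC$, $D_\can$ lies on segment $AC$, $\psi_\sigma(D_\can)=0$, and $\phi_\sigma(D_\can)=\gamma_\sigma-\psi_\sigma(D_\can)=\gamma_\sigma$.

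Where your sketch wobbles is the route you propose for $G_\sigma=G'_\sigma$. The claim that ``$\beta_\sigma=\pi/2$ makes $\angle AB_\sigma C=\pi/2$'' is false: from Construction~$\ref{const:cond}$ one has $\angle AB_\sigma C=\pi/2+\delta_\sigma$ (for $\delta_\sigma\leqslant\pi/2$), which does not depend on $\beta_\sigma$ at all and is a right angle only when $\delta_\sigma=0$. So the ``diameter-type configuration'' picture does not hold when $\delta_\sigma>0$, and the promised ``short angle chase'' through that configuration would not close. The clean replacement is the reflection you already have all the ingredients for: $AE_\sigma$ is the bisector of $\angle B_\sigma AD_\can$ and $|AB_\sigma|=|AD_\can|$, so reflection across line $AE_\sigma$ swaps $B_\sigma\leftrightarrow D_\can$ and fixes $G'_\sigma\in AE_\sigma$, giving $\angle AB_\sigma G'_\sigma=\angle AD_\can G'_\sigma$. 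Since $B'=A$ forces $A$, $D_\can$, $C$ to be collinear and $G'_\Lt G'_\Rt\perp CD_\can$ at $D_\can$ by Lemma~$\ref{lem:B'C_perp_GG}$, we get $\angle AD_\can G'_\sigma=\pi/2=\pi-\beta_\sigma$, which is precisely the defining property of $G_\sigma$ in Construction~$\ref{const:pos}$, $(3)$; hence $G'_\sigma=G_\sigma$. With that replacement your argument is complete and coincides in substance with the paper's.
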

\begin{proof}
This follows easily from $B'=A$.
\end{proof}
\section{Conclusion}\label{sec:concl}
In this paper we obtained a geometric construction of canonical positive and negative origon gadgets.
For this purpose, we found points $B'$ and $D'_\Lt ,D'_\Rt$ with which we can determine 
the canonical dividing point $D_\can$ and the critical dividing points $D_\Lt ,D_\Rt$
by drawing a segment from the respective points to $C$ and taking the intersection with minor arc $B_\Lt B_\Rt$.
It is more or less surprising that point $D'_\sigma$ turned out to satisfy a simple equation $\phi'_\sigma (D'_\sigma )=2\beta_\sigma$
in the proof of Proposition $\ref{prop:tan_rho_D'}$ after some calculation.
Using this geometric interpretation of the dividing points, we also gave simplified proofs of results obtained in our previous papers.

Thus we found a canonical choice 
among an infinite number of possible choices of the dividing point for a positive origon gadget with a given set of angle parameters.
The canonical dividing point is different from the three practical choices of the dividing point suggested in \cite{Doi20}
which yield critical, orthogonal and balanced gadgets.
The notion of canonical pairs enables us to treat positive and negative origons on a one-to-one basis.

Although our results here do not extend the range of $3$D objects which we can extrude with $3$D gadgets,
we achieved a better geometric understanding of origon gadgets, which will also be useful in future developments of origami extrusions.

\end{document}